\colorlet{MyBlue}{DodgerBlue!75!Black}
\colorlet{MyGreen}{DarkGreen!85!Black}
\newcommand{\EMAIL}[1]{\email{\href{mailto:#1}{#1}}}
\numberwithin{equation}{section}  
\newcommand{\dd}{\:d}
\newcommand{\eps}{\varepsilon}
\newcommand{\from}{\colon}
\newcommand{\pd}{\partial}
\newcommand{\bP}{\mathbf{P}}
\newcommand{\boldb}{\mathbf{b}}
\newcommand{\bx}{\mathbf{x}}
\newcommand{\by}{\mathbf{y}}
\newcommand{\R}{\mathbb{R}}
\newcommand{\N}{\mathbb{N}}
\DeclareMathOperator*{\argmax}{arg\,max}
\DeclareMathOperator*{\argmin}{arg\,min}
\DeclareMathOperator{\bigoh}{\mathcal O}
\DeclareMathOperator{\divg}{div}
\DeclareMathOperator{\ex}{\mathbb{E}}
\DeclareMathOperator{\prob}{\mathbb{P}}
\DeclareMathOperator{\supp}{supp}
\DeclarePairedDelimiter{\braces}{\{}{\}}
\DeclarePairedDelimiter{\bracks}{[}{]}
\DeclarePairedDelimiter{\abs}{\lvert}{\rvert}
\DeclarePairedDelimiter{\norm}{\lVert}{\rVert}
\DeclarePairedDelimiterXPP{\dnorm}[1]{}{\lVert}{\rVert}{_{\ast}}{#1}
\DeclarePairedDelimiterX{\braket}[2]{\langle}{\rangle}{#1,#2}
\DeclarePairedDelimiterX{\inner}[2]{\langle}{\rangle}{#1,#2}
\DeclarePairedDelimiterX{\setdef}[2]{\{}{\}}{#1:#2}
\DeclarePairedDelimiterXPP{\exclude}[1]{\mathopen{}\setminus}{\{}{\}}{}{#1}
\DeclarePairedDelimiterXPP{\probof}[1]{\prob}{(}{)}{}{%

#1}
\DeclarePairedDelimiterXPP{\exof}[1]{\ex}{[}{]}{}{%

#1}
\newcommand{\dis}{\displaystyle}
\newcommand{\txs}{\textstyle}
\newcommand{\textpar}[1]{\textup(#1\textup)}
\newcommand{\insum}{\sum\nolimits}
\theoremstyle{plain}
\newtheorem{theorem}{Theorem}
\newtheorem*{corollary*}{Corollary}
\newtheorem{lemma}[theorem]{Lemma}
\theoremstyle{definition}
\newtheorem{definition}[theorem]{Definition}
\newtheorem*{definition*}{Definition}
\theoremstyle{remark}
\newtheorem*{remark*}{Remark}
\newtheorem{example}{Example}
\numberwithin{equation}{section}
\numberwithin{theorem}{section}
\numberwithin{remark}{section}
\numberwithin{example}{section}
\newcommand{\bvec}{e}
\newcommand{\feas}{\mathcal{X}}
\newcommand{\intfeas}{\feas^{\circ}}
\newcommand{\base}{p}
\newcommand{\sol}{x^{\ast}}
\DeclareMathOperator{\Eucl}{\Pi}
\DeclareMathOperator{\logit}{\Lambda}
\newcommand{\choice}{Q}
\newcommand{\fench}{G}
\newcommand{\lyap}{L}
\newcommand{\strong}{K}
\newcommand{\depth}{\Omega}
\newcommand{\flow}{\Phi}
\newcommand{\run}{n}
\newcommand{\temp}{\eta}
\newcommand{\play}{i}
\newcommand{\playalt}{j}
\newcommand{\nPlayers}{N}
\newcommand{\players}{\mathcal{\nPlayers}}
\newcommand{\pure}{\alpha}
\newcommand{\purealt}{\beta}
\newcommand{\nPures}{A}
\newcommand{\pures}{\mathcal{\nPures}}
\newcommand{\strat}{x}
\newcommand{\strats}{\feas}
\newcommand{\intstrats}{\intfeas}
\newcommand{\pay}{u}
\newcommand{\payv}{v}
\newcommand{\game}{\Gamma}
\newcommand{\gamefull}{\game(\players,\pures,\pay)}
\newcommand{\eq}{\strat^{\ast}}
\newcommand{\peq}{\pure^{\ast}}
\newcommand{\ptest}{\hat\pure}
\newcommand{\olim}{\hat\strat}
\DeclareMathOperator{\reg}{Reg}
\newcommand{\graph}{\mathcal{G}}
\newcommand{\nodes}{\players}
\newcommand{\edges}{\mathcal{E}}
\newcommand{\edge}{e}
\newcommand{\ball}{\mathbb{B}}
\newcommand{\intsimplex}{\simplex^{\!\circ}}
\newcommand{\simplex}{\Delta}
\newcommand{\dynfield}{V}
\newcommand{\PM}[1]{\todo[color=DodgerBlue!20!LightGray,author=\textbf{PM},inline]{\small #1\\}}
\newcommand{\CP}[1]{\todo[color=DarkGreen!20!LightGray,author=\textbf{CP},inline]{\small #1\\}}
\begin{document}


\title{Cycles in adversarial regularized learning}

\author{Panayotis Mertikopoulos$^{\ast}$}
\address{$^{\ast}$ Univ. Grenoble Alpes, CNRS, Inria, LIG, F-38000 Grenoble, France.}
\EMAIL{panayotis.mertikopoulos@imag.fr}

\author{\\Christos Papadimitriou$^{\S}$}
\address{$^{\S}$ UC Berkeley.}
\EMAIL{christos@berkeley.edu}

\author{Georgios Piliouras$^{\ddag}$}
\address{$^{\ddag}$ Singapore University of Technology and Design.}
\EMAIL{georgios@sutd.edu.sg}

\thanks{%
Panayotis Mertikopoulos was partially supported by
the French National Research Agency (ANR) project ORACLESS (ANR\textendash GAGA\textendash13\textendash JS01\textendash 0004\textendash 01)
and the Huawei Innovation Research Program ULTRON.
Georgios Piliouras would like to acknowledge SUTD grant SRG ESD 2015 097 and MOE AcRF Tier 2 Grant  2016-T2-1-170.
}


\keywords{%
Regret;
\acl{MWU};
zero-sum games;
dueling algorithms.}

\newcommand{\acdef}[1]{\textit{\acl{#1}} \textup{(\acs{#1})}\acused{#1}}
\newcommand{\acdefp}[1]{\emph{\aclp{#1}} \textup(\acsp{#1}\textup)\acused{#1}}

\newacro{FRL}[FoReL]{``Follow the Regularized Leader''}
\newacro{MW}{multiplicative weights}
\newacro{MWU}{multiplicative weights update}
\newacro{LHS}{left-hand side}
\newacro{RHS}{right-hand side}
\newacro{NE}{Nash equilibrium}
\newacroplural{NE}[NE]{Nash equilibria}
\newacro{CCE}{coarse correlated equilibrium}
\newacroplural{CCE}[CCE]{coarse correlated equilibria}
\newacro{SA}{stochastic approximation}
\newacro{iid}[i.i.d.]{independent and identically distributed}

\begin{abstract}
%
%
Regularized learning is a fundamental technique in online optimization, machine learning and many other fields of computer science. 
A natural question that arises in these settings is how regularized learning algorithms behave when faced against each other.
 We study a natural formulation of this problem by coupling regularized learning dynamics in zero-sum games. We show that the system's behavior is Poincar\'{e} recurrent, implying that almost every trajectory revisits any (arbitrarily small) neighborhood of its starting point infinitely often. This cycling behavior is robust to the agents' choice of regularization mechanism (each agent could be using a different regularizer), to positive-affine transformations of the agents' utilities, and it also persists in the case of networked competition, i.e., for  zero-sum polymatrix games.
\end{abstract}

\maketitle


\acresetall

\section{Introduction}
\label{sec:introduction}

\CP{Regularization is optimization's latest and most incisive twist, its present zeitgeist: 
Through the introduction of a new component to the objective, 
a new algorithm results which overcomes ill-conditioning and overfitting, and
achieves sparsity and parsimony without sacrificing efficiency.
In the context of on-line optimization, regularization is exemplified...}

Regularization is a fundamental and incisive method in optimization, its present \emph{zeitgeist} and its entry into machine learning.
Through the introduction of a new component in the objective,
regularization techniques overcome ill-conditioning and overfitting, and they yield algorithms that achieve sparsity and parsimony without sacrificing efficiency \cite{ben2001lectures,Cesa06,Arora05themultiplicative}.

In the context of online optimization, these features are exemplified in the family of learning algorithms known as \acdef{FRL} \cite{SSS07}.
\ac{FRL} represents an important archetype of adaptive behavior for several reasons:
it provides optimal min-max regret guarantees ($\bigoh(t^{-1/2})$ in an adversarial setting),
it offers significant flexibility with respect to the geometry of the problem at hand,
and
it captures numerous other dynamics as special cases (hedge, \acl{MW}, gradient descent, etc.)
\cite{hazan2016introduction,Cesa06,Arora05themultiplicative}. 
As such, given that these regret guarantees hold without any further assumptions about how payoffs/costs are determined at each stage, the dynamics of \ac{FRL} have been the object of intense scrutiny and study in algorithmic game theory.

The standard way of analyzing such no-regret dynamics in games involves a two-step approach.
The first step exploits the fact that the empirical frequency of play under a no-regret algorithm converges to the game's set of \acfp{CCE}.
The second involves proving some useful property of the game's \acp{CCE}:
For instance, leveraging $(\lambda, \mu)$-robustness \cite{Roughgarden09} implies that the social welfare at a \ac{CCE} lies within a small constant of the optimum social welfare;
as another example, the product of the marginal distributions of \ac{CCE} in zero-sum games is Nash.
In this way, the no-regret properties of \ac{FRL} can be turned into convergence guarantees for the players' empirical frequency of play (that is, in a time-averaged, correlated sense).

Recently, several papers have moved beyond this ``black-box'' framework and focused instead on obtaining stronger regret/convergence guarantees for systems of learning algorithms coupled together in games with a specific structure.
Along these lines, Daskalakis et al. \cite{Daskalakis:2011:NNA:2133036.2133057} and Rakhlin and Sridharan  \cite{rakhlin2013optimization} developed classes of dynamics that enjoy a $\bigoh(\log t/t)$ regret minimization rate in two-player zero-sum games.
Syrgkanis et al. \cite{Syrgkanis:2015:FCR:2969442.2969573} further analyzed a recency biased variant of \ac{FRL} in more general multi-player games and showed that it is possible to achieve an $\bigoh(t^{-3/4})$ regret minimization rate.
The social welfare converges at a rate of $\bigoh(t^{-1})$, a result which was extended to standard versions of \ac{FRL} dynamics in \cite{foster2016learning}.

Whilst a regret-based analysis provides significant insights about these systems,
it does not answer a fundamental behavioral question:
\begin{quote}
\centering
\itshape
Does the system converge to a \acl{NE}?
\\
Does it even stabilize?
\end{quote}
The dichotomy between a self-stabilizing, convergent system and a system with recurrent cycles is of obvious significance, but a regret-based analysis cannot distinguish between the two.
Indeed, convergent, recurrent, and even chaotic \cite{2017arXiv170301138P} systems may exhibit equally strong regret minimization properties in general games, so the question remains:
What does the long-run behavior of \ac{FRL} look like, really?

This question becomes particularly interesting and important under perfect competition (such as zero-sum games and variants thereof).
Especially in practice, zero-sum games can capture optimization ``duels'' \cite{immorlica2011dueling}:
for example, two Internet search engines competing to maximize their market share can be modeled as players in a zero-sum game with a convex strategy space.
In \cite{immorlica2011dueling} it was shown that the time-average of a regret-minimizing class of dynamics converges to an approximate equilibrium of the game. 
Finally, zero-sum games have also been used quite recently as a model for deep learning optimization techniques in image generation and discrimination \cite{goodfellow2014generative,schuurmans2016deep}.

In each of the above cases, min-max strategies are typically thought of as the axiomatically correct prediction.
The fact that the time average of the marginals of a \ac{FRL} procedure converges to such states is considered as further evidence of the correctness of this prediction.
However, the long-run behavior of the \emph{actual} sequence of play (as opposed to its time-averages) seems to be trickier, and a number of natural questions arise:
\begin{itemize}
[-]
\itshape
\item
Does optimization-driven learning converge under perfect competition?
\item
Does fast regret minimization necessarily imply \textpar{fast} equilibration in this case?
\end{itemize}


\subsection*{Our results}
We settle these questions with a resounding ``no''.
Specifically, we show that the behavior of \ac{FRL} in zero-sum games with an interior equilibrium (e.g. Matching Pennies) is \emph{Poincaré recurrent}, implying that almost every trajectory revisits any (arbitrarily small) neighborhood of its starting point infinitely often.
Importantly, the observed cycling behavior is robust to the agents' choice of regularization mechanism (each agent could be using a different regularizer), and it applies to any positive affine transformation of zero-sum games (and hence all strictly competitive games \cite{adler2009note}) even though these transformations lead to \emph{different} trajectories of play.
Finally, this cycling behavior also persists in the case of networked competition, i.e. for constant-sum polymatrix games   \cite{cai2016zero,Cai,DP09}.


\begin{figure}[tbp]
\centering
\subfigure{\includegraphics[width=.44\textwidth]{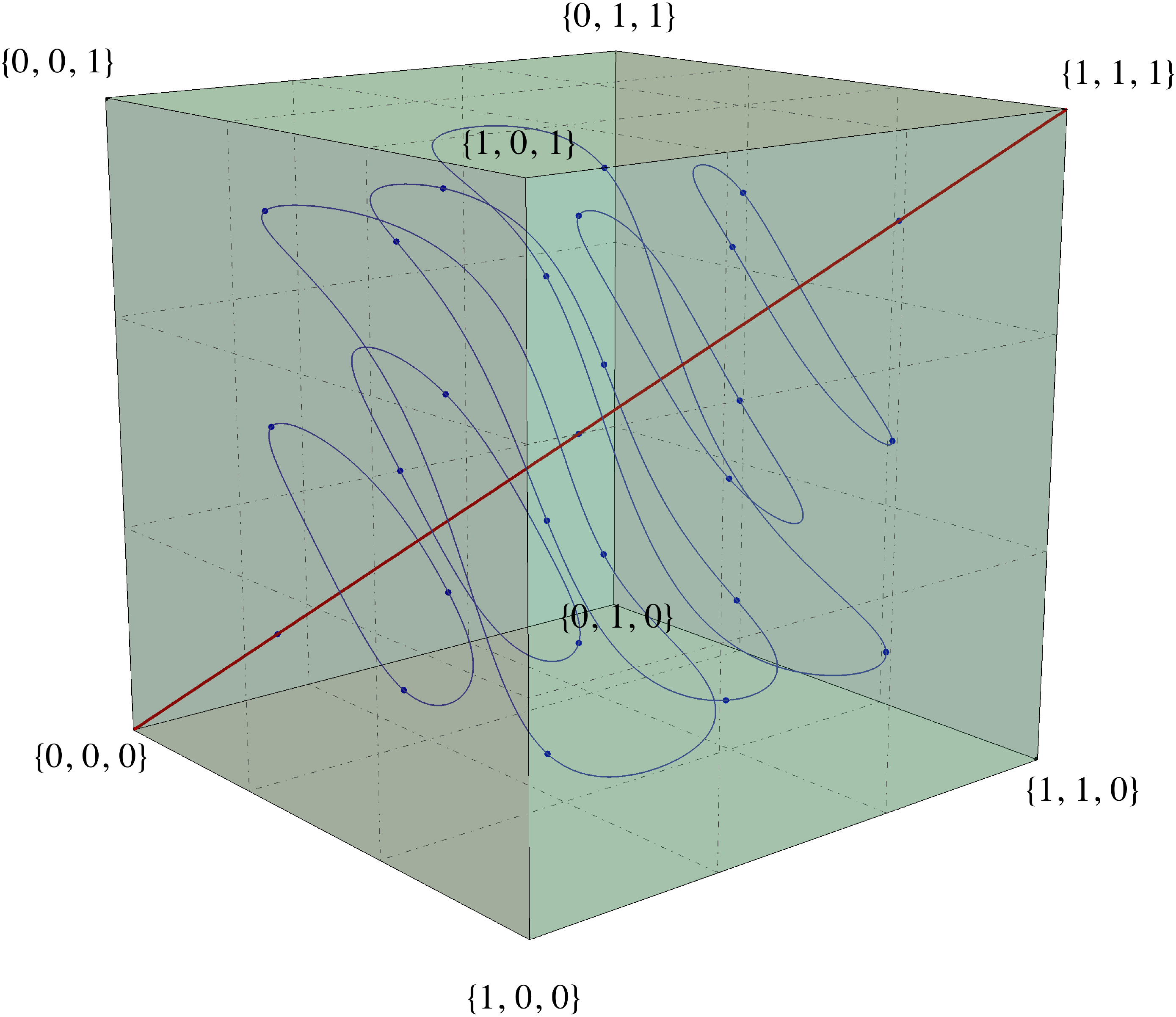}}
\hfill
\subfigure{\includegraphics[width=.44\textwidth]{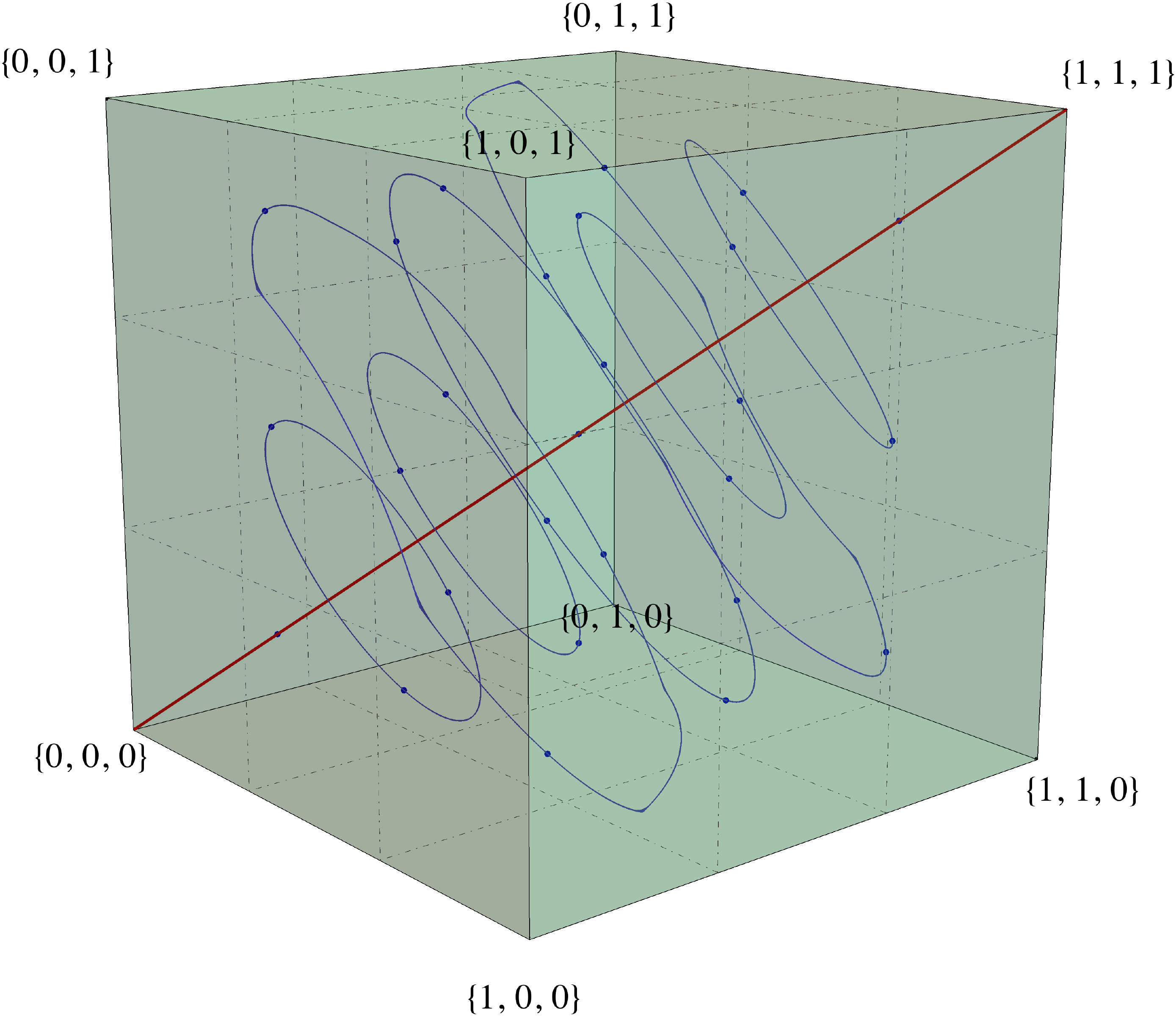}}%
\caption{Evolution of the dynamics of \ac{FRL} in a $3$-player zero-sum polymatrix game with entropic and Euclidean regularization (left and right respectively).
The game considered is a graphical variant of Matching Pennies with three players.
As can be seen, the trajectories of \ac{FRL} orbit the game's line of \aclp{NE} (dark red).
The kinks observed in the Euclidean case occur when the support of the trajectory of play changes;
by contrast, the \acl{MW} dynamics (left) are interior, so they do not exhibit such kinks.}%
\vspace{-1ex}
\label{fig:recurrence}
\end{figure}


Given that the no-regret guarantees of \ac{FRL} require a decreasing step-size (or learning rate),%
\footnote{A standard trick is to decrease step-sizes by a constant factor after a window of ``doubling'' length \cite{SS11}.}
we focus on a smooth version of \ac{FRL} described by a dynamical system in continuous time.
The resulting \ac{FRL} dynamics enjoy a particularly strong $\bigoh(t^{-1})$ regret minimization rate and they capture as a special case the replicator dynamics \cite{TJ78,Tay79,SS83} and the projection dynamics \cite{Fri91,SDL08,MS16}, arguably the most widely studied game dynamics in biology, evolutionary game theory and transportation science \cite{Hofbauer98,Weibull,San10}.
In this way, our analysis unifies and generalizes many prior results on the cycling behavior of evolutionary dynamics \cite{Hofbauer98,piliouras2014optimization,PiliourasAAMAS2014,Sato02042002} and it provides a new interpretation of these results through the lens of optimization and machine learning.

From a technical point of view, our analysis touches on several issues.
Our first insight is to focus not on the simplex of the players' mixed strategies, but on a \emph{dual} space of payoff differences.
The reason for this is that the vector of cumulative payoff differences between two strategies fully determines a player's mixed strategy under \ac{FRL}, and it is precisely these differences that ultimately drive the players' learning process.
Under this transformation, \ac{FRL} exhibits a striking property, \emph{incompressibility}:
the flow of the dynamics is volume-preserving, so a ball of initial conditions in this dual space can never collapse to a point.

That being said, the evolution of such a ball in the space of payoffs could be \emph{transient}, implying in particular that the players' mixed strategies could converge (because the choice map that links payoff differences to strategies is nonlinear).
To rule out such behaviors, we show that \ac{FRL} in zero-sum games with an interior \acl{NE} has a further important property:
it admits a \emph{constant of motion}.
Specifically, if $\eq = (\eq_{\play})_{\play\in\players}$ is an interior equilibrium of the game and $y_{\play}$ is an arbitrary point in the payoff space of player $\play$, this constant is given by the coupling function
\begin{equation}
\notag
\fench(y)
	= \sum_{\play\in\players} \bracks{h_{\play}^{\ast}(y_{\play}) - \braket{y_{\play}}{\eq_{\play}}},
\end{equation}
where $h_{\play}^{\ast}(y_{\play}) = \max_{x_{\play}} \{\braket{y_{\play}}{x_{\play}} - h_{\play}(x_{\play})\}$ is the convex conjugate of the regularizer $h_{\play}$ that generates the learning process of player $\play$ (for the details, see \cref{sec:learning,sec:results}).
Coupled with the dynamics' incompressibility, this invariance can be used to show that \ac{FRL} is \emph{recurrent}:
after some finite time, almost every trajectory returns arbitrarily close to its initial state.

On the other hand, if the game does not admit an interior equilibrium, the coupling above is no longer a constant of motion.
In this case, $\fench$ decreases over time until the support of the players' mixed strategies matches that of a \acl{NE} with maximal support:
as this point in time is approached, $\fench$ essentially becomes a constant.
Thus, in general zero-sum games, \ac{FRL} wanders perpetually in the smallest face of the game's strategy space containing all of the game's equilibria;
indeed, the only possibility that \ac{FRL} converges is if the game admits a unique \acl{NE} in pure strategies \textendash\ a fairly restrictive requirement.

\section{Definitions from game theory}
\label{sec:prelims}

\subsection{Games in normal form}
\label{sec:games}

We begin with some basic definitions from game theory.
A \emph{finite game in normal form} consists of a finite set of \emph{players} $\players = \{1,\dotsc,\nPlayers\}$,
each with a finite set of \emph{actions} (or \emph{strategies}) $\pures_{\play}$.
The preferences of player $\play$ for one action over another are determined by an associated \emph{payoff function} $\pay_{\play}\from \pures\equiv \prod_{\play} \pures_{\play}\to\R$ which assigns a reward $\pay_{\play}(\pure_{\play};\pure_{-\play})$ to player $\play\in\players$ under the \emph{strategy profile} $(\pure_{\play};\pure_{-\play})$ of all players' actions.%
\footnote{In the above, we use the standard shorthand $(\purealt_{\play};\pure_{-\play})$ for the profile $(\pure_{1},\dotsc,\purealt_{\play},\dotsc,\pure_{\nPlayers})$.}
Putting all this together, a game in normal form will be written as a tuple $\game \equiv \gamefull$ with players, actions and payoffs defined as above.

Players can also use \emph{mixed strategies}, i.e. mixed probability distributions $x_{\play} = (x_{\play\pure_{\play}})_{\pure_{\play}\in \pures_{\play}} \in \simplex(\pures_{\play})$ over their action sets $\pures_{\play}$.
The resulting probability vector $x_{\play}$ is called a \emph{mixed strategy} and we write $\strats_{\play} = \simplex(\pures_{\play})$ for the mixed strategy space of player $\play$.
Aggregating over players, we also write $\strats = \prod_{\play} \strats_{\play}$ for the game's \emph{strategy space}, i.e. the space of all strategy profiles $x=(x_{\play})_{\play\in\players}$.

In this context (and in a slight abuse of notation), the expected payoff of the $\play$-th player in the profile $
x = (x_{1},\dotsc,x_{\nPlayers})$ is
\begin{equation}
\label{eq:pay-avg}
\pay_{\play}(x)
	= \sum_{\mathclap{\pure_{1}\in\pures_{1}}} \:\dotsi\: \sum_{\mathclap{\pure_{\nPlayers}\in\pures_{\nPlayers}}} \;
	\pay_{\play}(\pure_{1},\dotsc,\pure_{\nPlayers})\,
	x_{1\pure_{1}} \dotsm x_{\nPlayers\pure_{\nPlayers}}.
\end{equation}
To keep track of the payoff of each pure strategy, we also write $\payv_{\play\pure_{\play}}(x) = \pay_{\play}(\pure_{\play};x_{-\play})$ for the payoff of strategy $\pure_{\play}\in\pures_{\play}$ under the profile $x\in\strats$ and $\payv_{\play}(x) = (\payv_{\play\pure_{\play}}(x))_{\pure_{\play}\in\pures_{\play}}$ for the resulting \emph{payoff vector} of player $\play$.
We then have
\begin{equation}
\label{eq:pay-payv}
\pay_{\play}(x)
	= \braket{\payv_{\play}(x)}{x_{\play}}
	= \sum_{\pure_{\play}\in\pures_{\play}} x_{\play\pure_{\play}} \payv_{\play\pure_{\play}}(x),
\end{equation}
where $\braket{\payv}{x} \equiv \payv^{\top} x$ denotes the ordinary pairing between $\payv$ and $x$.

The most widely used solution concept in game theory is that of a \acdef{NE}, defined here as a mixed strategy profile $\eq\in\strats$ such that
\begin{equation}
\label{eq:Nash}
\tag{NE}
\pay_{\play}(\eq_{\play};\eq_{-\play})
	\geq \pay_{\play}(x_{\play};\eq_{-\play})
\end{equation}
for every deviation $x_{\play}\in\strats_{\play}$ of player $\play$ and all $\play\in\players$.
Writing $\supp(\eq_{\play}) = \setdef{\pure_{\play}\in\pures_{\play}}{\eq_{\play}>0}$ for the support of $\eq_{\play}\in\strats_{\play}$, a \acl{NE} $\eq\in\strats$ is called \emph{pure} if $\supp(\eq_{\play}) = \{\peq_{\play}\}$ for some $\peq_{\play}\in\pures_{\play}$ and all $\play\in\players$.
At the other end of the spectrum, $\eq$ is said to be \emph{interior} (or \emph{fully mixed}) if $\supp(\eq_{\play}) = \pures_{\play}$ for all $\play\in\players$.
Finally, a \acdef{CCE} is a distribution $\pi$ over the set of action profiles  $\pures\equiv \prod_{\play} \pures_{\play}$ such that, for every player $\play\in\players$ and every action $\purealt_{\play}\in \pures_{\play}$, we have
$
		\sum_{\pure \in \pures} \payv_{\play}(\pure) \pi(\pure)
		\geq \sum_{\pure_{-\play} \in \pures_{-\play}} \payv_{\play}(\purealt_\play, \pure_{-\play}) \pi_{\play}(\pure_{-\play})
$, 
where \( \pi_\play(\pure_{-\play}) = \sum_{\pure_\play \in \pure_\play} \pi(\pure_\play, \pure_{-\play}) \) is the marginal distribution of $\pi$ with respect to $\play$.

\subsection{Zero-sum games and zero-sum polymatrix games}
\label{sec:0sum}

Perhaps the most widely studied class of finite games (and certainly the first to be considered) is that of \emph{$2$-player zero-sum games}, i.e. when $\players = \{1,2\}$ and $\pay_{1} = - \pay_{2}$.
Letting $\pay \equiv \pay_{1} = -\pay_{2}$, the \emph{value} of a $2$-player zero-sum game $\game$ is defined as
\begin{equation}
\label{eq:value}
\pay_{\game}
	= \max_{x_{1}\in\strats_{1}} \min_{x_{2}\in\strats_{2}} \pay(x_{1},x_{2})
	= \min_{x_{2}\in\strats_{2}} \max_{x_{1}\in\strats_{1}} \pay(x_{1},x_{2}),
\end{equation}
with equality following from von Neumann's celebrated min-max theorem \cite{vN28}. 
As is well known, the solutions of this saddle-point problem form a closed, convex set consisting precisely of the game's \aclp{NE};
moreover, the players' equilibrium payoffs are simply $\pay_{\game}$ and $-\pay_{\game}$ respectively.
As a result, \acl{NE} is the standard game-theoretic prediction in such games.

An important question that arises here is whether the straightforward equilibrium structure of zero-sum games extends to the case of a \emph{network} of competitors.
Following \cite{DP09,Cai,cai2016zero}, an \emph{$\nPlayers$-player pairwise zero-/constant-sum polymatrix game} consists of an (undirected) \emph{interaction graph} $\graph\equiv\graph(\nodes,\edges)$ whose set of nodes $\nodes$ represents the competing players,
with two nodes $\play,\playalt\in\nodes$ connected by an edge $\edge=(\play,\playalt)$ in $\edges$ if and only if the corresponding players compete with each other in a two-player zero-/constant-sum game.

To formalize this, we assume that
\begin{inparaenum}
[\itshape a\upshape)]
\item
every player has a finite set of actions $\pures_{\play}$ (as before);
and
\item
to each edge $\edge=\{\play,\playalt\} \in \edges$ is associated a two-player game zero-/constant-sum $\game_{\edge}$ with player set $\players_{\edge} = \{\play,\playalt\}$, action sets $\pures_{\play}$ and $\pures_{\playalt}$, and payoff functions $\pay_{\play\playalt} = \gamma_{\{\play,\playalt\}}-\pay_{\playalt\play} \from \pures_{\play}\times\pures_{\playalt}\to\R$ respectively.%
\footnote{In a zero-sum game, we have $\gamma_{\{\play,\playalt\}}=0$ by default.
Since the underlying interaction graph is assumed undirected, we also assume that the labeling of the players' payoff functions is symmetric.
At the expense of concision, our analysis extends to directed graphs, but we stick with the undirected case for clarity.}
\end{inparaenum}
The space of mixed strategies of player $\play$ is again $\strats_{\play} = \simplex(\pures_{\play})$, but the player's payoff is now determined by aggregating over all games involving player $\play$, i.e.
\begin{equation}
\label{eq:pay-tot}
\pay_{\play}(x)
	= \sum_{\playalt\in\nodes_{\play}} \pay_{\play\playalt}(x_{\play},x_{\playalt}),
\end{equation}
where $\nodes_{\play} = \setdef{\playalt\in\nodes}{\{\play,\playalt\}\in\edges}$ denotes the set of ``neighbors'' of player $\play$.
In other words, the payoff to player $\play$ is simply the the sum of all payoffs in the zero-/constant-sum games that player $\play$ plays with their neighbors.

In what follows, we will also consider games which are payoff-equivalent to positive-affine transformations of pairwise constant-sum polymatrix games.
Formally, we will allow for games $\game$ such that there exists a pairwise constant-sum polymatrix game $\game'$ and constants $a_{\play}>0$ and $b_{\play}\in\R$ for each player $\play$  such that $\pay_{\play}^{\game}(\pure) = a_i \pay_{\play}^{\game'}(\pure) + b_{\play}$ for each outcome $\pure\in \pures$.

\section{No-regret learning via regularization}
\label{sec:learning}

Throughout this paper, our focus will be on repeated decision making in low-information environments where the players don't know the rules of the game (perhaps not even that they are playing a game).
In this case, even if the game admits a unique \acl{NE}, it is not reasonable to assume that players are able to pre-compute their component of an equilibrium strategy \textendash\ let alone assume that all players are fully rational, that there is common knowledge of rationality, etc.

With this in mind, we only make the bare-bones assumption that every player seeks to at least minimize their ``regret'', i.e. the average payoff difference between a player's mixed strategy at time $t\geq0$ and the player's best possible strategy in hindsight.
Formally, assuming that play evolves in continuous time, the \emph{regret} of player $\play$ along the sequence of play $x(t)$ is defined as
\begin{equation}
\label{eq:regret}
\reg_{\play}(t)
	= \max_{\base_{\play}\in\strats_{\play}}
		\frac{1}{t}\int_{0}^{t} \bracks{\pay_{\play}(\base_{\play};x_{-\play}(s)) - \pay_{\play}(x(s))} \dd s,
\end{equation}
and we say that player $\play$ has \emph{no regret} under $x(t)$ if $\limsup_{t\to\infty}\reg_{\play}(t) \leq 0$.

The most widely used scheme to achieve this worst-case guarantee is known as \acdef{FRL}, an ex\-plo\-itation-ex\-plo\-ration class of policies that consists of playing a mixed strategy that maximizes the player's expected cumulative payoff (the exploitation part) minus a regularization term (exploration).
In our continuous-time framework, this is described by the learning dynamics
\begin{equation}
\label{eq:FRL}
\tag{FoReL}
\begin{aligned}
y_{\play}(t)
	&= y_{\play}(0) + \int_{0}^{t} \payv_{\play}(
	x(s)) \dd s,
	\\
x_{\play}(t)
	&= \choice_{\play}(y_{\play}(t)),
\end{aligned}
\end{equation}
where the so-called \emph{choice map} $\choice_{\play}\from\R^{\pures_{\play}}\to\strats_{\play}$ is defined as
\begin{equation}
\label{eq:choice}
\choice_{\play}(y_{\play})
	= \argmax_{x_{\play}\in\strats_{\play}} \{\braket{y_{\play}}{x_{\play}} - h_{\play}(x_{\play})\}.
\end{equation}

In the above, the \emph{regularizer function} $h_{\play}\from\strats_{\play}\to\R$ is a convex penalty term which smoothens the ``hard'' $\argmax$ correspondence $y_{\play}\mapsto\argmax_{x_{\play}\in\strats_{\play}} \braket{y_{\play}}{x_{\play}}$ that maximizes the player's cumulative payoff over $[0,t]$.
As a result, the ``regularized leader'' $\choice_{\play}(y_{\play}) =  \argmax_{x_{\play}\in\strats_{\play}} \{\braket{y_{\play}}{x_{\play}} - h_{\play}(x_{\play})\}$ is biased towards the \emph{prox-center} $\base_{\play} = \argmin_{x_{\play}\in\strats_{\play}} h_{\play}(x_{\play})$ of $\strats_{\play}$.
For most common regularizers, the prox-center is interior (and usually coincides with the barycenter of $\strats$), so the regularization in \eqref{eq:choice} encourages exploration by favoring mixed strategies with full support.

In \cref{app:examples}, we present in detail two of the prototypical examples of \eqref{eq:FRL}:
\begin{inparaenum}
[\itshape i\hspace*{1pt}\upshape)]
\item
the \acdef{MW} dynamics induced by the entropic regularizer function $h_{\play}(x) = \sum_{\pure_{\play}\in\pures_{\play}} x_{\play\pure_{\play}} \log x_{\play\pure_{\play}}$ (which lead to the replicator dynamics of evolutionary game theory);
and
\item
the projection dynamics induced by the Euclidean regularizer $h_{\play}(x) = \frac{1}{2} \norm{x_{\play}}^{2}$.
\end{inparaenum}
For concreteness, we will assume in what follows that the regularizer of every player $\play\in\players$ satisfies the following minimal requirements:
\begin{enumerate}
\item
$h_{\play}$ is continuous and strictly convex on $\strats_{\play}$.
\item
$h_{\play}$ is smooth on the relative interior of every face of $\strats_{\play}$ (including $\strats_{\play}$ itself).
\end{enumerate}
Under these basic assumptions, the ``regularized leader'' $\choice_{\play}(y_{\play})$ is well-defined in the sense that \eqref{eq:choice} admits a unique solution.
More importantly, we have the following no-regret guarantee:

\begin{theorem}
\label{thm:no-reg}
A player following \eqref{eq:FRL} enjoys an $\bigoh(1/t)$ regret bound, no matter what other players do.
Specifically, if player $\play\in\players$ follows \eqref{eq:FRL}, then, for every continuous trajectory of play $x_{-\play}(t)$ of the opponents of player $\play$, we have
\begin{equation}
\label{eq:no-reg}
\reg_{\play}(t)
	\leq \frac{\depth_{\play}}{t},
\end{equation}
where
$\depth_{\play} = \max h_{\play} - \min h_{\play}$
is a positive constant.
\end{theorem}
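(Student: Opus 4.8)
The plan is to pass to player $\play$'s dual (payoff) variable $y_{\play}$ and to recognize the regret integral as a telescoping quantity controlled by the convex conjugate of the regularizer $h_{\play}$. First, observe that by \eqref{eq:pay-payv} the integrand in \eqref{eq:regret} equals $\braket{\payv_{\play}(x(s))}{\base_{\play}-x_{\play}(s)}$, and that $\payv_{\play}(x(s))$ depends only on the opponents' trajectory $x_{-\play}(s)$, which is continuous by hypothesis. Hence $s\mapsto\payv_{\play}(x(s))$ is continuous, $y_{\play}(\cdot)$ is $C^{1}$ with $\dot y_{\play}(s)=\payv_{\play}(x(s))$, and $x_{\play}(s)=\choice_{\play}(y_{\play}(s))$ is continuous as well (so no circularity arises in \eqref{eq:FRL}). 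Integrating the linear term $\braket{\dot y_{\play}(s)}{\base_{\play}}$ over $[0,t]$ gives $\braket{y_{\play}(t)-y_{\play}(0)}{\base_{\play}}$, so
\[
t\cdot\reg_{\play}(t)=\max_{\base_{\play}\in\strats_{\play}}\left[\braket{y_{\play}(t)-y_{\play}(0)}{\base_{\play}}-\int_{0}^{t}\braket{\dot y_{\play}(s)}{x_{\play}(s)}\dd s\right].
\]

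Next, I would evaluate the remaining integral through the convex conjugate $h_{\play}^{\ast}(y_{\play})=\max_{x_{\play}\in\strats_{\play}}\{\braket{y_{\play}}{x_{\play}}-h_{\play}(x_{\play})\}$. Since $h_{\play}$ is strictly convex on the compact set $\strats_{\play}$, the maximizer defining $h_{\play}^{\ast}$ is unique and equals $\choice_{\play}(y_{\play})$; by a standard envelope argument, $h_{\play}^{\ast}$ is therefore $C^{1}$ on all of $\R^{\pures_{\play}}$ with $\grad h_{\play}^{\ast}(y_{\play})=\choice_{\play}(y_{\play})$. Composing with the $C^{1}$ curve $y_{\play}(\cdot)$ and using $x_{\play}(s)=\choice_{\play}(y_{\play}(s))$ gives $\tfrac{d}{ds}h_{\play}^{\ast}(y_{\play}(s))=\braket{x_{\play}(s)}{\dot y_{\play}(s)}$, so by the fundamental theorem of calculus $\int_{0}^{t}\braket{\dot y_{\play}(s)}{x_{\play}(s)}\dd s=h_{\play}^{\ast}(y_{\play}(t))-h_{\play}^{\ast}(y_{\play}(0))$.

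Finally, substituting this identity back and applying the Fenchel--Young inequality $\braket{y_{\play}(t)}{\base_{\play}}-h_{\play}^{\ast}(y_{\play}(t))\leq h_{\play}(\base_{\play})$ for each fixed $\base_{\play}$, together with the standard initialization $y_{\play}(0)=0$ (equivalently, $x_{\play}(0)=\choice_{\play}(0)=\argmin_{x_{\play}}h_{\play}$ is the prox-center, so that $h_{\play}^{\ast}(0)=-\min h_{\play}$), yields
\[
t\cdot\reg_{\play}(t)\leq\max_{\base_{\play}\in\strats_{\play}}h_{\play}(\base_{\play})-\min_{x_{\play}\in\strats_{\play}}h_{\play}(x_{\play})=\depth_{\play},
\]
which is exactly \eqref{eq:no-reg}.

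The step I expect to require the most care is the differentiability of $h_{\play}^{\ast}$ and the validity of the chain rule under only the minimal assumptions on $h_{\play}$ (strict convexity plus piecewise smoothness): one has to check that strict convexity of $h_{\play}$ on the compact polytope $\strats_{\play}$ already forces $h_{\play}^{\ast}$ to be globally $C^{1}$ on $\R^{\pures_{\play}}$ — via uniqueness of the maximizer and the Berge maximum theorem for continuity of $\choice_{\play}$ — even though $h_{\play}$ itself may fail to be smooth up to the relative boundary of $\strats_{\play}$; phrasing everything through the value-function/envelope formulation avoids having to differentiate $h_{\play}$ directly when $\choice_{\play}(y_{\play})$ sits on a lower-dimensional face of $\strats_{\play}$. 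The other ingredients — the dual rewriting, the fundamental theorem of calculus, and the Fenchel--Young bound — are routine.
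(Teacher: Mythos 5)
Your proof is correct and follows essentially the same route as the paper's: both pass to the dual variable, use the identity $\grad h_{\play}^{\ast} = \choice_{\play}$ to express the integrated regret as the increment of the coupling $h_{\play}^{\ast}(y_{\play}) - \braket{y_{\play}}{\base_{\play}}$, and close with the Fenchel\textendash Young inequality; the paper merely packages the derivative computation as \cref{lem:dFench}. Your explicit appeal to the initialization $y_{\play}(0)=0$ is in fact also needed (implicitly) in the paper's step $h_{\play}^{\ast}(y_{\play}(0)) - \braket{y_{\play}(0)}{\base_{\play}} = -h_{\play}(\choice_{\play}(y_{\play}(0)))$, so you are if anything more careful on that point.
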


\PM{Guys, review the stuff below to make sure I'm not committing any strategic blunders.}

To streamline our discussion, we relegate the proof of \cref{thm:no-reg} to \cref{app:proofs};
we also refer to \cite{KM17} for a similar regret bound for \eqref{eq:FRL} in the context of online convex optimization.
Instead of discussing the proof, we close this section by noting that \eqref{eq:no-reg} represents a striking improvement over the $\Theta(t^{-1/2})$ worst-case bound for \ac{FRL} in discrete time \cite{SS11}.
In view of this, the continuous-time framework we consider here can be seen as particularly amenable to learning because it allows players seek to minimize their regret (and thus converge to \aclp{CCE}) at the fastest possible rate.

%
%

\section{Recurrence in adversarial regularized learning}
\label{sec:results}

In this section, our aim is to take a closer look at the ramifications of fast regret minimization under \eqref{eq:FRL} beyond convergence to the set of \aclp{CCE}.
Indeed, as is well known, this set is fairly large and may contain thoroughly non-rationalizable strategies:
for instance, Viossat and Zapechelnyuk \cite{VZ13} recently showed that a \acl{CCE} could assign positive selection probability \emph{only} to strictly dominated strategies.
Moreover, the time-averaging that is inherent in the definition of the players' regret leaves open the possibility of complex day-to-day behavior e.g. periodicity, recurrence, limit cycles or chaos \cite{Sato02042002,piliouras2014optimization,2017arXiv170301138P,CRS16}.
Motivated by this, we examine the long-run behavior of the \eqref{eq:FRL} in the popular setting of zero-sum games (with or without interior equilibria) and several extensions thereof.

\PM{George, please doublecheck what I wrote below: it was written in a hurry, so maybe I missed something.
Also, what's your go-to reference for all this? Arnold? Hirsch and Smale?}

A key notion in our analysis is that of \emph{\textpar{Poincaré} recurrence}.
Intuitively, a dynamical system is recurrent if, after a sufficiently long (but \emph{finite}) time, almost every state returns arbitrarily close to the system's initial state.%
\footnote{Here, ``almost'' means that the set of such states has full Lebesgue measure.}
More formally, given a dynamical system on $\strats$ that is defined by means of a \emph{semiflow} $\flow\from\strats\times[0,\infty)\to\strats$,
we have:%
\footnote{Recall that a continuous map $\flow\from\strats\times[0,\infty)\to\strats$ is a \emph{semiflow} if $\flow(x,0) = x$ and $\flow(x,t+s) = \flow(\flow(x,t),s)$ for all $x\in\strats$ and all $s,t\geq0$.
Heuristically, $\flow_{t}(x) \equiv \flow(x,t)$ describes the trajectory of the dynamical system starting at $x$.}

\begin{definition}
\label{def:recurrence}
A point $x\in\strats$ is said to be \emph{recurrent} under $\flow$ if, for every neighborhood $U$ of $x$ in $\strats$, there exists an increasing sequence of times $t_{n}\uparrow\infty$ such that $\flow(x,t_{n})\in U$ for all $n$.
Moreover, the flow $\flow$ is called \emph{\textpar{Poincaré} recurrent} if, for every measurable subset $A$ of $\feas$, the set of recurrent points in $A$ has full measure.
\end{definition}

An immediate consequence of \cref{def:recurrence} is that, if a point is recurrent, there exists an increasing sequence of times $t_{n}\uparrow\infty$ such that $\flow(x,t_{n})\to x$.
On that account, recurrence can be seen as the flip side of convergence:
under the latter, (almost) every initial state of the dynamics eventually reaches some well-defined end-state;
instead, under the former, the system's orbits fill the entire state space and return arbitarily close to their starting points infinitely often (so there is no possibility of convergence beyond trivial cases).

\subsection{Zero-sum games with an interior equilibrium}
\label{sec:interior}

Our first result is that \eqref{eq:FRL} is recurrent (and hence, non-convergent) in zero-sum games with an interior \acl{NE}:

\begin{theorem}
\label{thm:recurrence}
Let $\game$ be a $2$-player zero-sum game that admits an interior \acl{NE}.
Then, almost every solution trajectory of \eqref{eq:FRL} is recurrent;
specifically, for \textpar{Lebesgue} almost every initial condition $x(0) = \choice(y(0)) \in\strats$, there exists an increasing sequence of times $t_{n}\uparrow\infty$ such that $x(t_{n})\to x(0)$.
\end{theorem}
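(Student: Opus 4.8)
The plan is to push the dynamics into the dual (payoff) space, establish two structural properties there — Liouville-type incompressibility and a constant of motion — and then invoke the Poincaré recurrence theorem on the level sets of that constant.

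\medskip

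\textbf{Step 1: Reduce to the dual dynamics.} Since $x_\play(t) = \choice_\play(y_\play(t))$ is completely determined by $y_\play(t)$, it suffices to study the dynamics of $y = (y_\play)_{\play\in\players}$ in the payoff space $\prod_\play \R^{\pures_\play}$. Differentiating \eqref{eq:FRL} gives the autonomous system $\dot y_\play = \payv_\play(x) = \payv_\play(\choice(y))$, i.e. $\dot y = \dynfield(y)$ with $\dynfield_\play(y) = \payv_\play(\choice_\play(y_\play); \choice_{-\play}(y_{-\play}))$. There is a redundancy — adding a constant to all coordinates of $y_\play$ does not change $\choice_\play(y_\play)$ — so I would quotient out the "all-ones" direction in each player's block and work on the reduced space (the orthogonal complement of $\one$ in each $\R^{\pures_\play}$), on which $\choice_\play$ becomes a genuine diffeomorphism onto $\relint \strats_\play$ by strict convexity and smoothness of $h_\play$.

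\medskip

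\textbf{Step 2: Incompressibility.} I would show the flow of $\dot y = \dynfield(y)$ is volume-preserving in this reduced space by checking $\divg \dynfield = 0$. The key computation: $\partial \dynfield_{\play\pure}/\partial y_{\playalt\purealt}$ involves $\partial v_{\play\pure}/\partial x_{\playalt\purealt}$ times $\partial (\choice_{\playalt})_{\purealt}/\partial y_{\playalt\cdot}$; the diagonal terms $\play = \playalt$ vanish because $v_{\play\pure}(x) = \pay_\play(\pure; x_{-\play})$ does not depend on $x_\play$ at all. Hence every diagonal entry of the Jacobian of $\dynfield$ is zero, so $\divg \dynfield = 0$ and Liouville's theorem gives volume preservation. (This argument is insensitive to the zero-sum structure — it only uses the multilinearity of the payoffs.)

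\medskip

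\textbf{Step 3: The constant of motion.} Let $\eq = (\eq_\play)$ be the interior Nash equilibrium. Define $\fench(y) = \sum_\play [ h_\play^\ast(y_\play) - \braket{y_\play}{\eq_\play} ]$ as in the introduction, where $h_\play^\ast$ is the convex conjugate. Using the standard duality fact $\grad h_\play^\ast(y_\play) = \choice_\play(y_\play) = x_\play$, I compute along trajectories
\begin{equation}
\notag
\frac{d}{dt}\fench(y(t)) = \sum_\play \braket{\grad h_\play^\ast(y_\play) - \eq_\play}{\dot y_\play} = \sum_\play \braket{x_\play - \eq_\play}{\payv_\play(x)} = \sum_\play \bracks{\pay_\play(x) - \pay_\play(\eq_\play; x_{-\play})}.
\end{equation}
The last sum is exactly $\sum_\play \pay_\play(x) - \sum_\play \pay_\play(\eq_\play; x_{-\play})$. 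For a two-player zero-sum game $\sum_\play \pay_\play(x) = 0$ identically, and $\pay_1(\eq_1; x_2) + \pay_2(\eq_2; x_1) = \pay(\eq_1, x_2) - \pay(x_1, \eq_2)$; since $\eq$ is an interior equilibrium, $\pay(\eq_1, \cdot)$ is constant equal to the value $\pay_\game$ on all of $\strats_2$ (indifference: every pure strategy of player $2$ is a best reply, as it lies in the support of $\eq_2$), and likewise $\pay(\cdot, \eq_2) \equiv \pay_\game$ on $\strats_1$. Hence $\pay(\eq_1, x_2) - \pay(x_1, \eq_2) = \pay_\game - \pay_\game = 0$, so $\frac{d}{dt}\fench(y(t)) = 0$ and $\fench$ is a constant of motion.

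\medskip

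\textbf{Step 4: Properness of level sets and Poincaré recurrence.} Since each $h_\play$ is strictly convex on the compact simplex $\strats_\play$ and $\eq_\play$ is interior, $\fench$ is (after the quotient of Step 1) a strictly convex function whose gradient $\choice(y) - \eq$ vanishes only at $y$ with $\choice(y) = \eq$; combined with $\choice$ being a diffeomorphism onto the (bounded) interior of $\strats$, the sublevel sets $\{\fench \le c\}$ are compact and invariant. Restricting the volume-preserving flow to such a compact invariant set, the classical Poincaré recurrence theorem applies: almost every point (w.r.t. the ambient Lebesgue measure, which is finite and flow-invariant on each level set) is recurrent. Pulling this back through the diffeomorphism $\choice$ and noting a countable union of such sets exhausts the reduced payoff space, I conclude that for Lebesgue-a.e. initial condition $x(0) = \choice(y(0))$ there is $t_n \uparrow \infty$ with $y(t_n) \to y(0)$, hence $x(t_n) = \choice(y(t_n)) \to \choice(y(0)) = x(0)$ by continuity of $\choice$.

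\medskip

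\textbf{Main obstacle.} The delicate points are the measure-theoretic bookkeeping in Step 4 — one must be careful that the relevant invariant measure is finite on each level set and that "almost every" transfers correctly through the diffeomorphism $\choice$ and through the exhaustion by sublevel sets — and verifying that $\fench$ genuinely has compact (equivalently, bounded) sublevel sets in the reduced space. Boundedness can fail to be obvious if some $h_\play$ has gradient blowing up only mildly near the boundary of $\strats_\play$; the argument that $\fench(y) \to \infty$ as $y$ escapes to infinity in the reduced space follows from $h_\play^\ast$ being superlinear in directions where $\choice_\play(y_\play)$ approaches the boundary, but this should be spelled out using the two regularity assumptions imposed on $h_\play$. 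I expect Steps 2 and 3 to be essentially mechanical; the care lies entirely in making Step 4 rigorous.
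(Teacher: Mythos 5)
Your proposal is correct and follows essentially the same route as the paper: reduce to payoff differences modulo the all\textendash ones direction, obtain incompressibility from the fact that $\payv_{\play}$ does not depend on $x_{\play}$, show the Fenchel coupling $\fench$ is a constant of motion at an interior equilibrium, deduce boundedness of orbits from properness of $\fench$ in the reduced space (the paper's \cref{lem:Fenchel-bounded}), and conclude via Poincar\'e recurrence. The one caveat is that $\choice_{\play}$ need not be a diffeomorphism onto $\relint\strats_{\play}$ for non-steep regularizers such as the Euclidean one (where the choice map reaches the boundary and is non-injective even after the quotient), but this is harmless: mere continuity of the reduced choice map \textendash\ which is all the paper itself invokes \textendash\ suffices to pull recurrence of $z(t)$ back to recurrence of $x(t)$.
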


The proof of \cref{thm:recurrence} is fairly complicated, so we outline the basic steps below:
\begin{enumerate}
\item
We first show that the dynamics of the score sequence $y(t)$ are \emph{incompressible}, i.e. the volume of a set of initial conditions remains invariant as the dynamics evolve over time.
By Poincaré's recurrence theorem (cf. \cref{app:dynamics}), if every solution orbit $y(t)$ of \eqref{eq:FRL} remains in a compact set for all $t\geq0$, incompressibility implies recurrence.

\item
To counter the possibility of solutions escaping to infinity, we introduce a transformed system based on the differences between scores (as opposed to the scores themselves).
%
To establish boundedness in these dynamics, we consider the ``primal-dual'' coupling
\begin{equation}
\label{eq:Fenchel}
\fench(y)
	= \sum_{\play\in\players} \bracks{h_{\play}^{\ast}(y_{\play}) - \braket{y_{\play}}{\eq_{\play}}},
\end{equation}
where $\eq$ is an interior \acl{NE} and $h_{\play}^{\ast}(y_{\play}) = \max_{x_{\play}\in\strats_{\play}} \{\braket{y_{\play}}{x_{\play}} - h_{\play}(x_{\play})\}$
denotes the convex conjugate of $h_{\play}$.%
\footnote{This coupling is closely related to the so-called \emph{Bregman divergence} \textendash\ for the details, see \cite{Kiw97b,ABRT04,SS11,MS16}.}
The key property of this coupling is that it remains invariant under \eqref{eq:FRL};
however, its level sets are not bounded so, again, precompactness of solutions is not guaranteed.

\item
Nevertheless, under the score transformation described above,
the level sets of $\fench$ \emph{are} compact.
Since the transformed dynamics are invariant under said transformation, Poincaré's theorem finally implies recurrence.
\end{enumerate}


\begin{proof}[Proof of \cref{thm:recurrence}]
To make the above plan precise, fix some ``benchmark'' strategy $\ptest_{\play}\in\pures_{\play}$ for every player $\play\in\players$ and, for all $\pure_{\play} \in \pures_{\play} \exclude{\ptest_{\play}}$, consider the corresponding score differences
\begin{equation}
\label{eq:score-z}
z_{\play\pure_{\play}}
	= y_{\play\pure_{\play}} - y_{\play,\ptest_{\play}}.%
\end{equation}
Obviously, $z_{\play\ptest_{\play}} = y_{\play\ptest_{\play}} - y_{\play\ptest_{\play}}$ is identically zero so we can ignore it in the above definition.
In so doing, we obtain a linear map $\Pi_{\play}\from\R^{\pures_{\play}} \to \R^{\pures_{\play}\exclude{\ptest_{\play}}}$ sending $y_{\play} \mapsto z_{\play}$;
aggregating over all players, we also write $\Pi$ for the product map $\Pi = (\Pi_{1},\dotsc,\Pi_{\nPlayers})$ sending $y\mapsto z$.
For posterity, note that this map is surjective but \emph{not} injective,%
\footnote{Specifically, $\Pi_{\play}(y_{\play}) = \Pi_{\play}(y_{\play}')$ if and only if $y_{\play\pure_{\play}}' = y_{\play\pure_{\play}} + c$ for some $c\in\R$ and all $\pure_{\play}\in\pures_{\play}$.}
so it does not allow us to recover the score vector $y$ from the score difference vector $z$.

Now, under \eqref{eq:FRL}, the score differences \eqref{eq:score-z} evolve as
\begin{equation}
\label{eq:dyn-z}
\dot z_{\play\pure_{\play}}
	= \payv_{\play\pure_{\play}}(x(t)) - \payv_{\play\ptest_{\play}}(x(t)).
\end{equation}
However, since the \ac{RHS} of \eqref{eq:dyn-z} depends on $x = \choice(y)$ and the mapping $y\mapsto z$ is not invertible (so $y$ cannot be expressed as a function of $z$), the above does not a priori constitute an autonomous dynamical system (as required to apply Poincaré's recurrence theorem).
Our first step below is to show that \eqref{eq:dyn-z} does in fact constitute a well-defined dynamical system on $z$.

To do so, consider the reduced choice map $\hat\choice_{\play}\from\R^{\pures_{\play}\exclude{\ptest_{\play}}} \to \strats_{\play}$ defined as
\begin{equation}
\label{eq:choice-z}
\hat\choice_{\play}(z_{\play})
	= \choice_{\play}(y_{\play})
\end{equation}
for some $y_{\play}\in\R^{\pures_{\play}}$ such that $\Pi_{\play}(y_{\play}) = z_{\play}$.
That such a $y_{\play}$ exists is a consequence of $\Pi_{\play}$ being surjective;
furthemore, that $\hat\choice_{\play}(z_{\play})$ is well-defined is a consequence of the fact that $\choice_{\play}$ is invariant on the fibers of $\Pi_{\play}$.
Indeed, by construction, we have $\Pi_{\play}(y_{\play}) = \Pi_{\play}(y_{\play}')$ if and only if $y_{\play\pure_{\play}}' = y_{\play\pure_{\play}} + c$ for some $c\in\R$ and all $\pure_{\play}\in\pures_{\play}$.
Hence, by the definition of $\choice_{\play}$, we get
\begin{flalign}
\choice_{\play}(y_{\play}')
	&\txs
	= \argmax\limits_{x_{\play}\in\strats_{\play}} \braces*{\braket{y_{\play}}{x_{\play}} + c \sum_{\pure_{\play}\in\pures_{\play}} x_{\play\pure_{\play}} - h_{\play}(x_{\play})}
	\notag\\
	&= \argmax_{x_{\play}\in\strats_{\play}} \braces{\braket{y_{\play}}{x_{\play}} - h_{\play}(x_{\play})}
	= \choice_{\play}(y_{\play}),
\end{flalign}
where we used the fact that $\sum_{\pure_{\play}\in\pures_{\play}} x_{\play\pure_{\play}} = 1$.
The above shows that $\choice_{\play}(y_{\play}') = \choice_{\play}(y_{\play})$ if and only if $\Pi_{\play}(y_{\play}) = \Pi_{\play}(y_{\play}')$, so $\hat\choice_{\play}$ is well-defined.

Letting $\hat\choice \equiv (\hat\choice_{1},\dotsc,\hat\choice_{\nPlayers})$ denote the aggregation of the players' individual choice maps $\hat\choice_{\play}$, it follows immediately that $\choice(y) = \hat\choice(\Pi(y)) = \hat\choice(z)$ by construction.
Hence, the dynamics \eqref{eq:dyn-z} may be written as
\begin{equation}
\label{eq:FRL-z}
\dot z
	= \dynfield(z),
\end{equation}
where
\begin{equation}
\dynfield_{\play\pure_{\play}}(z)
	= \payv_{\play\pure_{\play}}(\hat\choice_{\play}(z)) - \payv_{\play\ptest_{\play}}(\hat\choice_{\play}(z)).
\end{equation}
These dynamics obviously constitute an autonomous system, so our goal will be to use Liouville's formula and Poincaré's theorem in order to establish recurrence and then conclude that the induced trajectory of play $x(t)$ is recurrent by leveraging the properties of $\hat\choice$.

As a first step towards applying Liouville's formula, we note that the dynamics \eqref{eq:FRL-z} are \emph{incompressible}.
Indeed, we have
\begin{equation}
\label{eq:incompressible}
\frac{\pd\dynfield_{\play\pure_{\play}}}{\pd z_{\play\pure_{\play}}}
	= \sum_{\purealt_{\play}\in\pures_{\play}}
		\frac{\pd\dynfield_{\play\pure_{\play}}}{\pd x_{\play\purealt_{\play}}}
		\frac{\pd x_{\play\purealt_{\play}}}{\pd z_{\play\pure_{\play}}}
	= 0,
\end{equation}
because $\payv_{\play}$ does not depend on $x_{\play}$.
We thus obtain $\divg_{z} \dynfield(z) = 0$, i.e. the dynamics \eqref{eq:FRL-z} are incompressible.

We now show that every solution orbit $z(t)$ of \eqref{eq:FRL-z} is \emph{precompact}, that is, $\sup_{t\geq0} \norm{z(t)} < \infty$.
To that end, note that the coupling $\fench(y) = \sum_{\play\in\players} \bracks{h_{\play}^{\ast}(y_{\play}) - \braket{y_{\play}}{\eq_{\play}}}$ defined in \eqref{eq:Fenchel} remains invariant under \eqref{eq:FRL} when $\game$ is a $2$-player zero-sum game.
Indeed, by \cref{lem:dFench}, we have
\begin{flalign}
\label{eq:dFench}
\frac{d\fench}{dt}
	= \sum_{\play\in\players} \braket{\payv_{\play}(x)}{x_{\play} - \eq_{\play}}
	&= \braket{\payv_{1}(x)}{x_{1} - \eq_{1}}
		+ \braket{\payv_{2}(x)}{x_{2} - \eq_{2}}
	\notag\\
	&= \pay_{1}(x_{1},x_{2}) - \pay_{1}(\eq_{1},x_{2})
		+ \pay_{2}(x_{1},x_{2}) - \pay_{2}(x_{1},\eq_{2})
	= 0,
\end{flalign}
where we used the fact that $\choice_{\play} = \nabla h_{\play}^{\ast}$ in the first line (cf. \eqref{eq:choice-conj} above), and the assumption that $\eq$ is an interior \acl{NE} of a $2$-player zero-sum game in the last one.
We conclude that $\fench(y(t))$ remains constant under \eqref{eq:FRL}, as claimed.

By \cref{lem:Fenchel-bounded} in \cref{app:aux}, the invariance of $\fench(y(t))$ under \eqref{eq:FRL} implies that the score differences $z_{\play\pure_{\play}}(t) = y_{\play\pure_{\play}}(t) - y_{\play\ptest_{\play}}(t)$ also remain bounded for all $t\geq0$.
Hence, by Liouville's formula and Poincaré's recurrence theorem, the dynamics \eqref{eq:FRL-z} are \emph{recurrent}, i.e. for (Lebesgue) almost every initial condition $z_{0}$ and every neighborhood $U$ of $z_{0}$, there exists some $\tau_{U}$ such that $z(\tau_{U})\in U$ (cf. \cref{def:recurrence}).
Thus, taking a shrinking net of balls $\ball_{\run}(z_{0}) = \setdef{z}{\norm{z-z_{0}}\leq1/n}$ and iterating the above, it follows that there exists an increasing sequence of times $t_{n}\uparrow\infty$ such that $z(t_{n})\to z_{0}$.
Therefore, to prove the corresponding claim for the induced trajectories of play $x(t) = \choice(y(t)) = \hat\choice(z(t))$ of \eqref{eq:FRL}, fix an initial condition $x_{0}\in\intstrats$ and take some $z_{0}$ such that $x_{0} = \hat\choice(z_{0})$.
By taking $t_{\run}$ as above, we have $z(t_{\run})\to z_{0}$ so, by continuity, $x(t_{\run}) = \hat\choice(z_{\run}) \to \hat\choice(z_{0}) = x_{0}$.
This shows that any solution orbit $x(t)$ of \eqref{eq:FRL} is recurrent and our proof is complete.
\end{proof}

\begin{remark*}
We close this section by noting that the invariance of \eqref{eq:Fenchel} under \eqref{eq:FRL} induces a foliation of $\feas$, with each individual trajectory of \eqref{eq:FRL} living on a ``leaf'' of the foliation (a level set of $\fench$).
\cref{fig:recurrence} provides a schematic illustration of this foliation/cycling structure.
\end{remark*}

\subsection{Zero-sum games with no interior equilibria}
\label{sec:boundary}

At first sight, \cref{thm:recurrence} suggests that cycling is ubiquitous in zero-sum games;
however, if the game does not admit an interior equilibrium, the behavior of \eqref{eq:FRL} turns out to be qualitatively different.
To state our result for such games, it will be convenient to assume that the players' regularizer functions are \emph{strongly convex},
i.e. each $h_{\play}$ can be bounded from below by a quadratic minorant:
\begin{equation}
\label{eq:str-cvx}
h_{\play}(tx_{\play} + (1-t)x_{\play}')
	\leq t h_{\play}(x_{\play}) + (1-t) h_{\play}(x_{\play}') - \tfrac{1}{2} \strong_{\play} t(1-t) \norm{x_{\play} - x_{\play}'}^{2},
\end{equation}
for all $x_{\play}, x_{\play}'\in\strats_{\play}$ and for all $t\in[0,1]$.
Under this technical assumption, we have:

\begin{theorem}
\label{thm:recurrence-boundary}
Let $\game$ be a $2$-player zero-sum game that does not admit an interior \acl{NE}.
Then, for every initial condition of \eqref{eq:FRL}, the induced trajectory of play $x(t)$ converges to the boundary of $\strats$.
Specifically, if $\eq$ is a \acl{NE} of $\game$ with maximal support, $x(t)$ converges to the relative interior of the face of $\strats$ spanned by $\supp(\eq)$.
\end{theorem}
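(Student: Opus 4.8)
The plan is to reuse the primal--dual coupling \eqref{eq:Fenchel}, but now with respect to a \acl{NE} $\eq$ of \emph{maximal support} (recall that the \acl{NE} set of a zero-sum game is a product $O_{1}\times O_{2}$ of polytopes, where $O_{\play}$ is the set of optimal strategies of player~$\play$, so maximal support simply means $\eq_{\play}\in\relint O_{\play}$). The first step is to redo the computation \eqref{eq:dFench} \emph{without} assuming $\eq$ interior: since $\game$ is zero-sum, $\sum_{\play}\braket{\payv_{\play}(x)}{x_{\play}} = \pay_{1}(x)+\pay_{2}(x) = 0$, while $\braket{\payv_{1}(x)}{\eq_{1}} = \pay(\eq_{1},x_{2})$ and $\braket{\payv_{2}(x)}{\eq_{2}} = -\pay(x_{1},\eq_{2})$, so
\begin{equation}
\notag
\frac{d\fench}{dt}
	= \pay(x_{1}(t),\eq_{2}) - \pay(\eq_{1},x_{2}(t))
	= \bracks*{\pay(x_{1}(t),\eq_{2}) - \pay_{\game}} + \bracks*{\pay_{\game} - \pay(\eq_{1},x_{2}(t))}
	\leq 0,
\end{equation}
the two brackets being nonpositive because $\pay(x_{1},\eq_{2})\leq\pay_{\game}\leq\pay(\eq_{1},x_{2})$ for all $x_{1},x_{2}$ (optimality of $\eq_{2}$ and of $\eq_{1}$). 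Hence $\fench(y(t))$ is nonincreasing; it is also bounded below ($\fench(y)\geq-\sum_{\play}h_{\play}(\eq_{\play})$ by Fenchel--Young), so it converges, and the nonnegative functions $t\mapsto\pay_{\game}-\pay(x_{1}(t),\eq_{2})$ and $t\mapsto\pay(\eq_{1},x_{2}(t))-\pay_{\game}$ are both integrable on $[0,\infty)$.

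The second step promotes these integral bounds to pointwise convergence, and this is where strong convexity \eqref{eq:str-cvx} is used: it makes each $\choice_{\play}=\nabla h_{\play}^{\ast}$ globally Lipschitz (the usual strong-convexity/smoothness duality), and since $\dot y_{\play}=\payv_{\play}(x)$ is bounded, $x(t)=\choice(y(t))$ is Lipschitz in~$t$. The two functions above are therefore uniformly continuous and integrable on $[0,\infty)$, so Barbalat's lemma gives $\pay(x_{1}(t),\eq_{2})\to\pay_{\game}$ and $\pay(\eq_{1},x_{2}(t))\to\pay_{\game}$. Writing $\pay_{\game}-\pay(x_{1}(t),\eq_{2}) = \sum_{\pure_{1}\in\pures_{1}}x_{1\pure_{1}}(t)\bracks*{\pay_{\game}-\pay(\pure_{1},\eq_{2})}$ as a sum of nonnegative terms, I conclude $x_{1\pure_{1}}(t)\to0$ whenever $\pay(\pure_{1},\eq_{2})<\pay_{\game}$, i.e.\ whenever $\pure_{1}\notin\brep_{1}(\eq_{2})$; symmetrically, $x_{2\pure_{2}}(t)\to0$ for $\pure_{2}\notin\brep_{2}(\eq_{1})$.

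The third step is the purely game-theoretic fact that a maximal-support \acl{NE} of a two-player zero-sum game is \emph{quasi-strict}: $\supp(\eq_{\play})=\brep_{\play}(\eq_{-\play})$ for $\play=1,2$. I would obtain this from strict complementarity (the Goldman--Tucker theorem) for the mutually dual linear programs defining $O_{1}$ and $O_{2}$, which yields optimal strategies $x_{1}^{\ast},x_{2}^{\ast}$ with $\supp(x_{\play}^{\ast})=\brep_{\play}(x_{-\play}^{\ast})$; the chain
\begin{equation}
\notag
\brep_{1}(\eq_{2})
	= \bigcap\nolimits_{x_{2}\in O_{2}}\brep_{1}(x_{2})
	\subseteq \brep_{1}(x_{2}^{\ast})
	= \supp(x_{1}^{\ast})
	\subseteq \supp(\eq_{1})
	\subseteq \brep_{1}(\eq_{2})
\end{equation}
(first equality: $\eq_{2}\in\relint O_{2}$; last inclusion: $\eq$ is a \acl{NE}) then forces equality throughout, and likewise for player~$2$. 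Together with Step~2 this gives $x_{\play\pure_{\play}}(t)\to0$ for every $\pure_{\play}\notin\supp(\eq_{\play})$; equivalently, $x(t)$ converges to the face $F$ of $\strats$ spanned by $\supp(\eq)$.

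Finally, to upgrade ``converges to $F$'' to ``converges to $\relint F$'', I would pass to the zero-sum subgame $\game|_{F}$ in which each player is restricted to $\supp(\eq_{\play})$: there $\eq$ is an \emph{interior} equilibrium, so by the argument behind \cref{thm:recurrence} the associated Fenchel coupling is a constant of motion for the induced dynamics on $F$, with level sets that are compact subsets of $\relint F$ --- hence bounded away from $\partial F$. Consequently the $\omega$-limit set of $x(t)$, which lies in $F$, is invariant under the dynamics of $\game|_{F}$ and contained in one such level set, so it is a compact subset of $\relint F$; equivalently $\liminf_{t\to\infty}x_{\play\pure_{\play}}(t)>0$ for $\pure_{\play}\in\supp(\eq_{\play})$. \textbf{I expect this last step to be the main obstacle:} the choice maps $\choice_{\play}$ do not literally restrict to $F$ --- the ``suboptimal'' coordinates of $y_{\play}(t)$ diverge --- so one must show that the limiting behaviour of $x(t)$ along $F$ really is that of \eqref{eq:FRL} for $\game|_{F}$ (with a suitable effective regularizer). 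This needs an asymptotic analysis of $\choice_{\play}$ near the faces of $\strats_{\play}$ (precisely why each $h_{\play}$ was assumed smooth on every face), together with a compactness argument for $\omega$-limit sets; the quasi-strictness lemma of Step~3 also deserves a careful, if short, proof.
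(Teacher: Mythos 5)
Your proposal follows essentially the same route as the paper's proof: a Lyapunov argument based on the monotonicity of the coupling \eqref{eq:Fenchel} taken with respect to a maximal-support equilibrium, a Barbalat-type lemma (the paper's \cref{lem:derivative}) to upgrade ``$\fench(y(t))$ converges'' to ``$d\fench/dt\to0$'', and the LP-duality fact that a maximal-support \acl{NE} is quasi-strict (the paper's \cref{lem:Farka_eq}, proved there via the Gale--Kuhn--Tucker form of Farkas' lemma rather than Goldman--Tucker strict complementarity; the content is the same). Two of your micro-choices are in fact cleaner than the paper's: the per-strategy decomposition $\pay_{\game}-\pay(x_{1},\eq_{2})=\sum_{\pure_{1}}x_{1\pure_{1}}\bracks{\pay_{\game}-\pay(\pure_{1},\eq_{2})}$ lets you read off $x_{1\pure_{1}}(t)\to0$ for each non-essential $\pure_{1}$ directly, whereas the paper shows only that no $\omega$-limit can be interior and then ``repeats the argument on every subface'', which is more delicate to make rigorous; and your observation that $d\fench/dt\leq0$ holds unconditionally (not merely at interior points) is the right way to set up the integrability step. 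As for the final upgrade from ``$x(t)$ converges to the face spanned by $\supp(\eq)$'' to ``converges to its relative interior'', the obstacle you flag is genuine --- one must show that the asymptotic dynamics on the face are those of \eqref{eq:FRL} for the restricted game, which requires controlling $\choice_{\play}$ near the boundary of $\strats_{\play}$ --- but you have not fallen short of the paper here: its proof likewise stops at ``the support of $\olim$ is contained in $\supp(\eq)$'' and only gestures at a constrained version of \cref{thm:recurrence} for the relative-interior claim.
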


\cref{thm:recurrence-boundary} is our most comprehensive result for the behavior of \eqref{eq:FRL} in zero-sum games, so several remarks are in order.
First, we note that \cref{thm:recurrence-boundary} complements \cref{thm:recurrence} in a very natural way:
specifically, if $\game$ admits an interior \acl{NE}, \cref{thm:recurrence-boundary} suggests that the solutions of \eqref{eq:FRL} will stay within the relative interior $\intstrats$ of $\strats$ (since an interior equilibrium is supported on all actions).
Of course, \cref{thm:recurrence} provides a stronger result because it states that, within $\intstrats$, \eqref{eq:FRL} is recurrent.
Hence, applying both results in tandem, we obtain the following heuristic for the behavior of \eqref{eq:FRL} in zero-sum games:
\begin{quote}
\centering
\itshape
In the long run, \eqref{eq:FRL} wanders in perpetuity\\
in the smallest face of $\strats$ containing the equilibrium set of $\game$.
\end{quote}

This leads to two extremes:
On the one hand, if $\game$ admits an interior equilibrium, \eqref{eq:FRL} is recurrent and cycles in the level sets of the coupling function \eqref{eq:Fenchel}.
At the other end of the spectrum, if $\game$ admits only a single, pure equilibrium, then \eqref{eq:FRL} converges to it (since it has to wander in a singleton set).
In all other ``in-between'' cases, \eqref{eq:FRL} exhibits a hybrid behavior, converging to the face of $\strats$ that is spanned by the maximal support equilibrium of $\game$, and then cycling in that face in perpetuity.

The reason for this behavior is that the coupling \eqref{eq:Fenchel} is no longer a constant of motion of \eqref{eq:FRL} if the game does not admit an interior equilibrium.
As we show in \cref{app:proofs}, the coupling \eqref{eq:Fenchel} is strictly decreasing when the support of $x(t)$ is strictly greater than that of a \acl{NE} $\eq$ with maximal support.
When the two match, the rate of change of \eqref{eq:Fenchel} drops to zero, and we fall back to a ``constrained'' version of \cref{thm:recurrence}.
We make this argument precise in \cref{app:proofs} (where we present the proof of \cref{thm:recurrence-boundary}).

\subsection{Zero-sum polymatrix games \& positive affine payoff transformations}

We close this section by showing that the recurrence properties of \eqref{eq:FRL} are not unique to ``vanilla'' zero-sum games, but also occur when there is a \emph{network of competitors} \textendash\ i.e. in $\nPlayers$-player zero-sum polymatrix games.
In fact, the recurrence results carry over to any $\nPlayers$-player game which is isomorphic to a constant-sum polymatrix game with an interior equilibrium up to a positive-affine payoff transformation (possibly different transformation for each agent).
For example, this class of games contains all strictly competitive games \cite{adler2009note}.  
Such transformations do not affect the equilibrium structure of the game, but can affect  the geometry of the trajectories;
nevertheless, the recurrent behavior persists as shown by the following result:

\begin{theorem}
\label{thm:recurrence-net}
Let $\game = (\game_{\edge})_{\edge\in\edges}$ be a constant-sum polymatrix game
\textpar{or a positive affine payoff transformation thereof}.
If $\game$ admits an interior \acl{NE}, almost every solution trajectory of \eqref{eq:FRL} is recurrent;
specifically, for \textpar{Lebesgue} almost every initial condition $x(0) = \choice(y(0)) \in \strats$, there exists an increasing sequence of times $t_{n}\uparrow\infty$ such that $x(t_{n})\to x(0)$.
\end{theorem}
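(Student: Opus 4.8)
\emph{Plan.} The strategy is to replay the three moves behind the proof of \cref{thm:recurrence} — reduction to score differences, incompressibility, and a constant of motion with compact level sets — and to check that each survives the passage to a network of competitors and to per‑player positive‑affine payoff rescalings. Throughout, let $\game'$ denote the underlying pairwise constant‑sum polymatrix game, $\eq$ an interior \acl{NE} of $\game$ (equivalently of $\game'$, since affine transformations preserve \aclp{NE}), and $a_\play>0$, $b_\play\in\R$ the transformation constants, so that $\payv_\play^{\game}(x)=a_\play\payv_\play^{\game'}(x)+b_\play\one$ (in the purely constant‑sum case, $a_\play=1$, $b_\play=0$).

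\emph{Reduction and incompressibility.} First I would fix benchmark actions $\ptest_\play\in\pures_\play$ and pass to the score differences $z_{\play\pure_\play}=y_{\play\pure_\play}-y_{\play\ptest_\play}$ exactly as in the proof of \cref{thm:recurrence}: the reduced choice map $\hat\choice$ is well defined because $\choice_\play(y_\play+c\one)=\choice_\play(y_\play)$, an identity that uses nothing about the game, so \eqref{eq:FRL} again descends to an autonomous system $\dot z=\dynfield(z)$ of the form \eqref{eq:FRL-z}. Incompressibility, $\divg_z\dynfield=0$, then follows verbatim from the computation \eqref{eq:incompressible}: by \eqref{eq:pay-tot} the payoff vector $\payv_\play^{\game}$ of player $\play$ is a function of $x_{-\play}$ alone (an affine function $\payv_\play^{\game}(x)=\sum_{\playalt\in\nodes_\play}M_{\{\play,\playalt\}}x_\playalt+\text{const}$), hence $\pd\dynfield_{\play\pure_\play}/\pd x_{\play\purealt_\play}=0$ and therefore $\pd\dynfield_{\play\pure_\play}/\pd z_{\play\pure_\play}=0$.

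\emph{The constant of motion.} Next I would introduce the weighted coupling
\[
\fench(y)=\sum_{\play\in\players}\tfrac{1}{a_\play}\bracks{h_\play^{\ast}(y_\play)-\braket{y_\play}{\eq_\play}}.
\]
Differentiating along \eqref{eq:FRL} with $\choice_\play=\nabla h_\play^{\ast}$ gives $\tfrac{d\fench}{dt}=\sum_\play\tfrac{1}{a_\play}\braket{\payv_\play^{\game}(x)}{x_\play-\eq_\play}=\sum_\play\braket{\payv_\play^{\game'}(x)}{x_\play-\eq_\play}$, since $\braket{\one}{x_\play-\eq_\play}=0$. It therefore suffices to show $\sum_\play\braket{\payv_\play^{\game'}(x)}{x_\play-\eq_\play}=0$ for every $x\in\strats$. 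Summing the pairwise constant‑sum relations edge by edge shows that $\sum_\play\pay_\play^{\game'}(x)\equiv\sum_{\edge\in\edges}\gamma_\edge$ is a constant, so $\sum_\play\braket{\payv_\play^{\game'}(x)}{x_\play}=\sum_\edge\gamma_\edge$ and one is reduced to proving $\sum_\play\braket{\payv_\play^{\game'}(x)}{\eq_\play}=\sum_\edge\gamma_\edge$ for all $x$. Regrouping $\sum_\play\braket{\payv_\play^{\game'}(x)}{\eq_\play}=\sum_\play\sum_{\playalt\in\nodes_\play}\pay_{\play\playalt}(\eq_\play,x_\playalt)$ by the neighbour index and substituting $\pay_{\play\playalt}(\eq_\play,\argdot)=\gamma_{\{\play,\playalt\}}-\pay_{\playalt\play}(\eq_\play,\argdot)$, the vector multiplying $x_\playalt$ becomes $\sum_{\play\in\nodes_\playalt}\gamma_{\{\play,\playalt\}}\one-\payv_\playalt^{\game'}(\eq)$. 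Because $\eq$ is interior, every action of player $\playalt$ is a best reply, so $\payv_\playalt^{\game'}(\eq)$ is a constant vector; hence the coefficient of $x_\playalt$ is a multiple of $\one$ and $\sum_\play\braket{\payv_\play^{\game'}(x)}{\eq_\play}$ does not depend on $x$. Evaluating at $x=\eq$ identifies this constant with $\sum_\play\pay_\play^{\game'}(\eq)=\sum_\edge\gamma_\edge$, which closes the argument. Thus $\fench$ is a constant of motion of \eqref{eq:FRL}.

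\emph{Boundedness and recurrence.} Finally, each summand $h_\play^{\ast}(y_\play)-\braket{y_\play}{\eq_\play}$ is invariant under $y_\play\mapsto y_\play+c\one$, so $\fench$ descends to a function of $z$, and since the weights $1/a_\play$ are positive this positively‑weighted combination still has compact sublevel sets: the proof of \cref{lem:Fenchel-bounded} goes through unchanged, the weights only rescaling the relevant estimates. Hence every orbit $z(t)$ of \eqref{eq:FRL-z} is precompact; by Liouville's formula (incompressibility) and Poincaré's recurrence theorem, $z(t)$ is recurrent, and pulling back through the continuous map $\hat\choice$ exactly as at the end of the proof of \cref{thm:recurrence} gives $x(t)=\choice(y(t))=\hat\choice(z(t))\to x(0)$ along a suitable sequence $t_n\uparrow\infty$ for almost every initial condition. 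The step I expect to be the real work is the identity $\sum_\play\braket{\payv_\play^{\game'}(x)}{\eq_\play}=\sum_\edge\gamma_\edge$: it is the polymatrix analogue of ``the value of a zero‑sum game is attained against every deviation from an interior equilibrium,'' and carrying it out carefully means tracking edge orientations and the additive constants hidden inside each $\payv_\play^{\game'}$ when invoking the interior‑equilibrium indifference condition; once it is in place, incompressibility, compactness of level sets, and the pull‑back through $\hat\choice$ are routine repetitions of \cref{thm:recurrence}.
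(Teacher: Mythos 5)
Your proposal is correct and follows the same overall architecture as the paper's proof: pass to score differences, observe that incompressibility of \eqref{eq:FRL-z} survives because $\payv_{\play}$ still depends only on $x_{-\play}$, exhibit a per-player-weighted Fenchel coupling as a constant of motion, and conclude via \cref{lem:Fenchel-bounded}, Liouville and Poincar\'e. The one substantive difference is in how the conservation of $\fench$ is verified, and there your treatment is actually the more careful one. The paper regroups $\sum_{\play}\braket{\payv_{\play}(x)}{x_{\play}-\eq_{\play}}$ by edges and asserts that each edge term vanishes ``as in the case of \eqref{eq:dFench}''; but the two-player cancellation used there relies on $\eq_{\play}$ being an equilibrium of the \emph{edge} game $\game_{\edge}$, which need not hold for the restriction of a polymatrix equilibrium to a single edge, so the cancellation is only valid after summing over all edges. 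Your argument supplies exactly the missing global step: regrouping by the neighbour index, using interiority of $\eq$ to make each $\payv_{\playalt}^{\game'}(\eq)$ a constant vector (so the coefficient of $x_{\playalt}$ is a multiple of $\one$ and pairs trivially with a probability vector), and pinning down the constant by evaluating at $x=\eq$. This is the polymatrix ``value'' identity from the zero-sum polymatrix literature, and spelling it out is the right thing to do. The remaining discrepancy with the paper \textemdash\ your weights $1/a_{\play}$ versus the paper's $a_{\play}$ \textemdash\ is purely a matter of which direction the affine isomorphism is written in (you follow the convention of \cref{sec:0sum}, the appendix uses the inverse one), and both choices yield a conserved quantity whose level sets control the score differences player by player.
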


We leave the case of zero-sum polymatrix games with no interior equilibria to future work.

\section{Conclusions}
\label{sec:conclusion}

Our results show that the behavior of regularized learning in adversarial environments is considerably more intricate than the strong no-regret properties of \ac{FRL} might at first suggest.
Even though the empirical frequency of play under \ac{FRL} converges to the set of \aclp{CCE} (possibly at an increased rate, depending on the game's structure), the actual trajectory of play under \ac{FRL} is recurrent and exhibits cycles in zero-sum games.
We find this property particularly interesting as it suggests that ``black box'' guarantees are not the be-all/end-all of learning in games:
the theory of dynamical systems is rife with complex phenomena and notions that arise naturally when examining the behavior of learning algorithms in finer detail.


\appendix

\section{Examples of \ac{FRL} dynamics}
\label{app:examples}

\begin{example}[Multiplicative weights and the replicator dynamics]
\label{ex:choice-logit}
Perhaps the most widely known example of a regularized choice map is the so-called \emph{logit choice map}
\begin{equation}
\label{eq:choice-logit}
\logit_{\play}(y)
	= \frac{(\exp(y_{\play\pure_{\play}}))_{\pure_{\play}\in\pures_{\play}}}{\sum_{\purealt_{\play}\in\pures_{\play}} \exp(y_{\play\purealt_{\play}})}.
\end{equation}
This choice model was first studied in the context of discrete choice theory by McFadden \cite{McF74a} and it leads to the \acdef{MW} dynamics:%
\footnote{The terminology ``\acl{MW}'' refers to the fact that \eqref{eq:MW} is the continuous version of the discrete-time \acl{MWU} rule:
\begin{equation}
\label{eq:MWU}
\tag{MWU}
x_{\play\pure_{\play}}(t+1)
	= \frac
	{x_{\play\pure_{\play}}(t) e^{\temp_{\play} \payv_{\play\pure_{\play}}(x(t))}}
	{\sum_{\purealt_{\play}\in\pures_{\play}} x_{\play\purealt_{\play}}(t) e^{\temp_{\play} \payv_{\play\purealt_{\play}}(x(t))}},
\end{equation}
where $\temp_{\play}>0$ is the scheme's ``learning rate''.
For more details about \eqref{eq:MWU}, we refer the reader to \cite{Arora05themultiplicative}.}
\begin{equation}
\label{eq:MW}
\tag{MW}
\begin{aligned}
\dot y_{\play}
	&= \payv_{\play}(x),
	\\
x_{\play}
	&= \logit_{\play}(y_{\play}).
\end{aligned}	
\end{equation}
As is well known, the logit map above is obtained by the model \eqref{eq:choice} by considering the entropic regularizer
\begin{equation}
\label{eq:penalty-Gibbs}
h_{\play}(x)
	= \sum_{\pure_{\play}\in\pures_{\play}} x_{\play\pure_{\play}} \log x_{\play\pure_{\play}},
\end{equation}
i.e. the (negative) \emph{Gibbs\textendash Shannon entropy function}.
A simple differentiation of \eqref{eq:MW} then shows that the players' mixed strategies evolve according to the dynamics
\begin{equation}
\label{eq:RD}
\tag{RD}
\dot x_{\play\pure_{\play}}
	= x_{\play\pure_{\play}}
		\bracks*{\payv_{\play\pure_{\play}}(x) - \sum_{\purealt_{\play}\in\pures_{\play}} x_{\play\purealt_{\play}} \payv_{\play\purealt_{\play}}(x)},
\end{equation}
This equation describes the \emph{replicator dynamics} of \cite{TJ78}, the most widely studied model for evolution under natural selection in population biology and evolutionary game theory.
The basic relation between \eqref{eq:MW} and \eqref{eq:RD} was first noted in a single-agent environment by \cite{Rus99} and was explored further in game theory by \cite{HSV09,Sor09,MM10,MS16} and many others.
\end{example}

\begin{example}[Euclidean regularization and the projection dynamics]
\label{ex:choice-Eucl}
Another widely used example of regularization is given by the \emph{quadratic penalty}
\begin{equation}
\label{eq:penalty-Eucl}
h_{\play}(x_{\play})
	= \frac{1}{2} \sum_{\pure_{\play}\in\pures_{\play}} x_{\play\pure_{\play}}^{2}.
\end{equation}
The induced choice map \eqref{eq:choice} is the (Euclidean) \emph{projection map}
\begin{equation}
\label{eq:choice-Eucl}
\txs
\Eucl_{\play}(y_{\play})
	= \argmax_{x_{\play}\in\strats_{\play}}
		\braces[\big]{\braket{y_{\play}}{x_{\play}} - \tfrac{1}{2}\norm{x_{\play}}_{2}^{2} }
	= \argmin_{x_{\play}\in\strats_{\play}}
	\norm{y_{\play} - x_{\play}}_{2}^{2},
\end{equation}
leading to the \emph{projected reinforcement learning} process
\begin{equation}
\label{eq:PL}
\tag{PL}
\begin{aligned}
\dot y_{\play}
	&= \payv_{\play}(x),
	\\
x_{\play}
	&= \Eucl_{\play}(y_{\play}).
\end{aligned}
\end{equation}
The players' mixed strategies are then known to follow the \emph{projection dynamics}
\begin{equation}
\label{eq:PD}
\tag{PD}
\dot x_{\play\pure_{\play}}
	= \begin{cases}
	\dis\payv_{\play\pure_{\play}}(x) - \abs{\supp(x_{\play})}^{-1} \insum_{\purealt_{\play}\in\supp(x_{\play})} \payv_{\play\purealt_{\play}}(x)
		&\text{if $\pure_{\play}\in\supp(x_{\play})$},
		\\
	0
		&\text{if $\pure_{\play}\notin \supp(x_{\play})$},
	\end{cases}
\end{equation}
over all intervals for which the support of $x(t)$ remains constant \cite{MS16}.
The dynamics \eqref{eq:PD} were introduced in game theory by \cite{Fri91} as a geometric model of the evolution of play in population games;
for a closely related approach, see also \cite{NZ97,LS08} and references therein.
\end{example}

\section{Liouville's formula and Poincaré recurrence}
\label{app:dynamics}
\noindent
Below we present for completeness some basic results from the theory of dynamical systems.

\subsection*{Liouville's Formula}

Liouville's formula can be applied to any system of autonomous differential equations with a continuously differentiable
vector field $\xi$ on an open domain of $\mathcal{S} \subset \real^k$.  The divergence of $\xi$ at $x \in \mathcal{S}$ is defined
as the trace of the corresponding Jacobian at $x$, \textit{i.e.},  $\text{div}[\xi(x)]=\sum_{i=1}^k \frac{\partial \xi_i}{\partial x_i}(x)$. 
Since divergence is a continuous function we can compute its integral over measurable sets $A\subset \mathcal{S}$. Given 
any such set $A$, let $A(t)= \{\Phi(x_0,t): x_0 \in A\}$ be the image of $A$ under map $\Phi$ at time $t$. $A(t)$ is measurable
and is volume is $\text{vol}[A(t)]= \int_{A(t)}dx$. Liouville's formula states that the time derivative of the volume $A(t)$ exists and
is equal to the integral of the divergence over $A(t)$:

$$\frac{d}{dt} [A(t)] = \int_{A(t)} \text{div} [\xi(x)]dx.$$

A vector field is called divergence free if its divergence is zero everywhere. Liouville's formula trivially implies that volume is preserved
in such flows.

\subsection*{Poincar\'{e}'s recurrence theorem}
The notion of recurrence that we will be using in this paper goes back to Poincar\'{e} and specifically to his study of the three-body problem. In 1890, in his celebrated work \cite{Poincare1890}, he proved that whenever a dynamical system preserves volume almost all trajectories return arbitrarily close to their initial position, and they do so an infinite number of times. More precisely, Poincar\'{e} established the following:
\medskip

\textbf{Poincaré Recurrence:}
\textit{\cite{Poincare1890,barreira}
%
If a flow preserves volume and has only bounded orbits then for each open
set there exist orbits that intersect the set infinitely often.
}

\section{Technical proofs}
\label{app:proofs}

The first result that we prove in this appendix is a key technical lemma concerning the evolution of the coupling function \eqref{eq:Fenchel}:

\begin{lemma}
\label{lem:dFench}
Let $\base_{\play}\in\strats_{\play}$ and let $\fench_{\play}(y_{\play}) = h_{\play}^{\ast}(y_{\play}) - \braket{y_{\play}}{\base_{\play}}$ denote the coupling \eqref{eq:Fenchel} for player $\play\in\players$.
If player $\play\in\players$ follows \eqref{eq:FRL}, we have
\begin{equation}
\label{eq:dFench}
\frac{d}{dt} \fench_{\play}(y_{\play}(t))
	= \braket{\payv_{\play}(x(t))}{x_{\play}(t) - \base_{\play}},
\end{equation}
for every trajectory of play $x_{-\play}(t)$ of all players other than $\play$.
\end{lemma}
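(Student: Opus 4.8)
The plan is to differentiate $\fench_{\play}(y_{\play}(t)) = h_{\play}^{\ast}(y_{\play}(t)) - \braket{y_{\play}(t)}{\base_{\play}}$ along the trajectory by a direct chain-rule computation, resting on two ingredients: the identity $\nabla h_{\play}^{\ast} = \choice_{\play}$, and the fact that under \eqref{eq:FRL} the score vector $y_{\play}(\cdot)$ is continuously differentiable with $\dot y_{\play}(t) = \payv_{\play}(x(t))$.

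First I would establish the regularity of $h_{\play}^{\ast}$. Because $h_{\play}$ is continuous (hence bounded) and strictly convex on the compact set $\strats_{\play}$, the conjugate $h_{\play}^{\ast}(y_{\play}) = \max_{x_{\play}\in\strats_{\play}}\{\braket{y_{\play}}{x_{\play}} - h_{\play}(x_{\play})\}$ is finite for every $y_{\play}\in\R^{\pures_{\play}}$, so $\dom h_{\play}^{\ast} = \R^{\pures_{\play}}$; moreover strict convexity of $h_{\play}$ makes the maximizer in \eqref{eq:choice} unique (this is exactly the well-posedness of $\choice_{\play}$ already noted in \cref{sec:learning}). A standard fact of convex analysis (Danskin's theorem, or differentiability of a conjugate whose subdifferential is single-valued) then gives that $h_{\play}^{\ast}$ is differentiable on all of $\R^{\pures_{\play}}$ with $\nabla h_{\play}^{\ast}(y_{\play}) = \choice_{\play}(y_{\play})$ — this is the relation $\choice_{\play} = \nabla h_{\play}^{\ast}$ invoked in \eqref{eq:choice-conj}. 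Next I would read off from the first line of \eqref{eq:FRL} that $y_{\play}(t) = y_{\play}(0) + \int_{0}^{t} \payv_{\play}(x(s))\dd s$; since $x_{-\play}(\cdot)$ is continuous by hypothesis, $x_{\play}(\cdot) = \choice_{\play}(y_{\play}(\cdot))$ is continuous as a composition of continuous maps, so the integrand is continuous and $y_{\play}(\cdot)$ is $C^{1}$ with $\dot y_{\play}(t) = \payv_{\play}(x(t))$.

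Finally, applying the chain rule to $t\mapsto\fench_{\play}(y_{\play}(t))$ — legitimate since $\fench_{\play}$ is differentiable and $y_{\play}(\cdot)$ is $C^{1}$ — yields
\[
\frac{d}{dt}\fench_{\play}(y_{\play}(t))
	= \braket{\nabla h_{\play}^{\ast}(y_{\play}(t)) - \base_{\play}}{\dot y_{\play}(t)}
	= \braket{\choice_{\play}(y_{\play}(t)) - \base_{\play}}{\payv_{\play}(x(t))}
	= \braket{\payv_{\play}(x(t))}{x_{\play}(t) - \base_{\play}},
\]
where the last equality uses $x_{\play}(t) = \choice_{\play}(y_{\play}(t))$, and this is precisely \eqref{eq:dFench}. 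The only point requiring genuine care — and the step I would write out most carefully — is the differentiability of $h_{\play}^{\ast}$ together with the identity $\nabla h_{\play}^{\ast} = \choice_{\play}$: it needs strict convexity of $h_{\play}$ (for uniqueness of the $\argmax$) together with $\dom h_{\play}^{\ast} = \R^{\pures_{\play}}$ (so that differentiability, automatic in the interior of the conjugate's domain when the subgradient is single-valued, holds everywhere). Everything else is a routine computation.
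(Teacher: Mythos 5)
Your proposal is correct and follows essentially the same route as the paper's proof: invoke the identity $\choice_{\play} = \nabla h_{\play}^{\ast}$ and apply the chain rule to $t \mapsto h_{\play}^{\ast}(y_{\play}(t)) - \braket{y_{\play}(t)}{\base_{\play}}$ using $\dot y_{\play}(t) = \payv_{\play}(x(t))$. The additional care you devote to the differentiability of $h_{\play}^{\ast}$ and the $C^{1}$ regularity of $y_{\play}(\cdot)$ is detail the paper delegates to a citation, but the argument is identical.
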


\begin{proof}
We begin by recalling the ``maximizing argument'' identity
\begin{equation}
\label{eq:choice-conj}
\choice_{\play}(y_{\play}) = \nabla h_{\play}^{\ast}(y_{\play})
\end{equation}
which expresses the choice map $\choice_{\play}$ as a function of the convex conjugate of $h_{\play}$ \cite[p.~149]{SS11}.
With this at hand, a simple differentiation gives
\begin{flalign}
\frac{d}{dt} \fench_{\play}(y_{\play}(t))
	&= \frac{d}{dt} h_{\play}^{\ast}(y_{\play}(t)) - \braket{\dot y_{\play}(t)}{\base_{\play}}
	\notag\\
	&= \braket{\dot y_{\play}(t)}{\nabla h_{\play}^{\ast}(y_{\play}(t)) - \base_{\play}}
	\notag\\
	&= \braket{\payv_{\play}(x(t))}{x_{\play}(t) - \base_{\play}},
\end{flalign}
where the last step follows from the fact that $x_{\play}(t) = \choice_{\play}(y_{\play}(t)) = \nabla h_{\play}^{\ast}(y_{\play}(t))$.
\end{proof}

Armed with this lemma, we proceed to prove the no-regret guarantees of \eqref{eq:FRL}:

\begin{proof}[Proof of \cref{thm:no-reg}]
Fix some base point $\base_{\play}\in\strats_{\play}$ and let $\lyap_{\play}(t) = \fench_{\play}(y_{\play}(t)) = h_{\play}(y_{\play}(t)) - \braket{y_{\play}(t)}{\base_{\play}}$.
Then, by \cref{lem:dFench}, we have
\begin{equation}
\lyap_{\play}'(t)
	= \braket{\payv_{\play}(x(t))}{x_{\play}(t) - \base_{\play}}
\end{equation}
and hence, after integrating and rearranging, we get
\begin{equation}
\label{eq:no-reg1}
\int_{0}^{t} \bracks{\pay_{\play}(\base_{\play};x_{-\play}(s)) - \pay_{\play}(x(s))} \dd s
	= \int_{0}^{t} \braket{\payv_{\play}(x(s))}{\base_{\play} - x_{\play}(s)} \dd s
	= \lyap_{\play}(0) - \lyap_{\play}(t),
\end{equation}
where we used the fact that $\pay_{\play}(\base_{\play};x_{-\play}) = \braket{\payv_{\play}(x)}{\base_{\play}}$ \textendash\ cf.~\cref{eq:pay-payv} in \cref{sec:prelims}.
However, expanding the \acs{RHS} of \eqref{eq:no-reg1}, we get
\begin{flalign}
\lyap_{\play}(0) - \lyap_{\play}(t)
	&= h_{\play}^{\ast}(y_{\play}(0)) - \braket{y_{\play}(0)}{\base_{\play}}
	- h_{\play}^{\ast}(y_{\play}(t)) + \braket{y_{\play}(t)}{\base_{\play}}
	\notag\\
	&\leq h_{\play}^{\ast}(y_{\play}(0)) - \braket{y_{\play}(0)}{\base_{\play}} + h_{\play}(\base_{\play})
	\notag\\
	&= h_{\play}(\base_{\play}) - h_{\play}(\choice_{\play}(y_{\play}(0)))
	\notag\\
	&\leq \max h_{\play} - \min h_{\play}
	\equiv \depth_{\play},
\end{flalign}
where we used the defining property of convex conjugation in the second and third lines above \textendash\ i.e. that $h_{\play}^{\ast}(y_{\play}) \geq \braket{y_{\play}}{x_{\play}} - h_{\play}(x_{\play})$ for all $x_{\play}\in\strats_{\play}$, with equality if and only if $x_{\play} = \choice_{\play}(y_{\play})$.
Thus, maximizing \eqref{eq:no-reg1} over $\base_{\play}\in\strats_{\play}$, we finally obtain
\begin{equation}
\label{eq:no-reg2}
\reg_{\play}(t)
	= \max_{\base_{\play}\in\strats_{\play}}
		\frac{1}{t}\int_{0}^{t} \bracks{\pay_{\play}(\base_{\play};x_{-\play}(s)) - \pay_{\play}(x(s))} \dd s
		\leq \frac{\depth_{\play}}{t},
\end{equation}
as claimed.
\end{proof}

We now turn to two-player zero-sum games that do not admit interior equilibria.
To describe such equilibria in more detail, we consider below the notion of \emph{essential} and \emph{non-essential} strategies:

\begin{definition}
\label{def:essential}
A strategy $\pure_{\play}$ of agent $\play \in \{1,2\}$ in a zero sum game is called \emph{essential} if there exists a \acl{NE} in which player $\play$ plays $\pure_{\play}$ with positive probability.
A strategy that is not essential is called \emph{non-essential}.
\end{definition}

As it turns out, the \aclp{NE} of a zero-sum game admit a very useful characterization in terms of essential and non-essential strategies:

\begin{lemma}
\label{lem:Farka_eq}
Let $\game$ be a $2$-player zero-sum game that does not admit an interior \acl{NE}.
Then, there exists a mixed \acl{NE} $(\strat_{1}, \strat_{2})$ such that
\begin{inparaenum}
[\itshape a\upshape)]
\item
each agent plays each of their essential strategies with positive probability;
and
\item
for each agent deviating to a non-essential strategy results to a strictly worse performance than the value of the game.
\end{inparaenum}
\end{lemma}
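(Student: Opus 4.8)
The plan is to deduce \cref{lem:Farka_eq} from the classical fact that the pair of linear programs computing the value of a zero-sum game admits a \emph{strictly complementary} pair of optimal solutions — equivalently, that every finite two-player zero-sum game possesses a \emph{quasi-strict} \acl{NE}. First I would set up the equilibrium structure. Write $v = \pay_{\game}$ for the value and $\pay = \pay_{1} = -\pay_{2}$, and let $X_{\play}^{\ast}\subseteq\strats_{\play}$ be the set of optimal (maximin) strategies of player $\play$, so that $X_{1}^{\ast} = \argmax_{\strat_{1}\in\strats_{1}}\min_{\purealt_{2}}\pay(\strat_{1},\purealt_{2})$ and $X_{2}^{\ast} = \argmin_{\strat_{2}\in\strats_{2}}\max_{\pure_{1}}\pay(\pure_{1},\strat_{2})$. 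By von Neumann's minimax theorem and the interchangeability of optimal strategies, the set of \acl{NE} of $\game$ is exactly $X_{1}^{\ast}\times X_{2}^{\ast}$; in particular, a pure strategy $\pure_{\play}$ is essential if and only if $\pure_{\play}\in\supp(\strat_{\play})$ for some $\strat_{\play}\in X_{\play}^{\ast}$. Finally, $X_{1}^{\ast}$ and $X_{2}^{\ast}$ are precisely the optimal sets of the mutually dual linear programs $\max\{t : \pay(\strat_{1},\purealt_{2})\ge t\ \forall\purealt_{2},\ \strat_{1}\in\strats_{1}\}$ and $\min\{s : \pay(\pure_{1},\strat_{2})\le s\ \forall\pure_{1},\ \strat_{2}\in\strats_{2}\}$, both with value $v$.

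Next I would invoke strict complementarity (Goldman--Tucker) for this LP pair to fix optimal solutions $\bar\strat_{1}\in X_{1}^{\ast}$ and $\bar\strat_{2}\in X_{2}^{\ast}$ with the property that, for every pure strategy $\pure_{1}$ of player $1$ and every pure strategy $\purealt_{2}$ of player $2$, $\bar\strat_{1\pure_{1}}>0 \Leftrightarrow \pay(\pure_{1},\bar\strat_{2}) = v$ and $\bar\strat_{2\purealt_{2}}>0 \Leftrightarrow \pay(\bar\strat_{1},\purealt_{2}) = v$. It then remains to check that $\supp(\bar\strat_{\play})$ coincides with the essential support of player $\play$. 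One inclusion is immediate: if $\pure_{1}\in\supp(\bar\strat_{1})$, then $\pure_{1}$ is played with positive probability in the \acl{NE} $(\bar\strat_{1},\bar\strat_{2})$, hence essential. For the converse, if $\pure_{1}$ is essential, pick $\strat_{1}\in X_{1}^{\ast}$ with $\strat_{1\pure_{1}}>0$; since $(\strat_{1},\bar\strat_{2})$ is also a \acl{NE}, $\pure_{1}$ is a best reply to $\bar\strat_{2}$, i.e. $\pay(\pure_{1},\bar\strat_{2}) = v$, and strict complementarity gives $\bar\strat_{1\pure_{1}}>0$. The symmetric argument handles player $2$, yielding part \textit{a}). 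Part \textit{b}) is then immediate from the strict side of complementarity: a non-essential $\pure_{1}$ has $\bar\strat_{1\pure_{1}} = 0$, whence $\pay(\pure_{1},\bar\strat_{2}) < v$; and a non-essential $\purealt_{2}$ has $\bar\strat_{2\purealt_{2}} = 0$, whence $\pay(\bar\strat_{1},\purealt_{2}) > v$, i.e. $\pay_{2}(\bar\strat_{1},\purealt_{2}) = -\pay(\bar\strat_{1},\purealt_{2}) < -v$, which is strictly below player $2$'s equilibrium value.

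The step I expect to be the main obstacle is precisely the strict-complementarity input, and it is instructive to see why the naive route does not suffice. One is tempted to simply take relative-interior points $\bar\strat_{\play}$ of the polytopes $X_{\play}^{\ast}$: their supports already equal the essential sets, so part \textit{a}) holds. But part \textit{b}) is not reachable this way — an affineness/relative-interior argument only shows that a pure strategy which best-replies to \emph{one} optimal strategy of the opponent best-replies to \emph{all} of them, and it does not, on its own, rule out the existence of a non-essential strategy of this kind. Closing that gap is exactly the content of strict complementarity; it can be cited directly (Goldman--Tucker), or proved here by a Farkas-type/averaging argument — for each of the finitely many non-essential pure strategies of player $1$ one produces, via LP duality, an optimal strategy of player $2$ that strictly punishes it below $v$, and then averages these together with a relative-interior point of $X_{2}^{\ast}$ (and symmetrically for player $1$). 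This is where linear-programming duality, hence the "Farkas" flavor behind the lemma's name, genuinely enters.
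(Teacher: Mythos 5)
Your proposal is correct and is in substance the same argument as the paper's: the paper applies Farkas' lemma once per non-essential strategy to produce an optimal opponent strategy that punishes it strictly below the value, and then takes the barycenter of these together with equilibrium strategies witnessing each essential strategy — exactly the ``Farkas-type/averaging argument'' you describe in your last paragraph. Packaging this as Goldman–Tucker strict complementarity for the minimax LP pair is a clean and legitimate shortcut, and your warning that relative-interior points of the optimal sets alone give part (a) but not part (b) correctly identifies where the duality input is genuinely needed.
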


The key step in proving this characterization is Farkas' lemma;
the version we employ here is due to Gale, Kuhn and Tucker \cite{BookGKT}):
\PM{Provide reference? Done.}

\begin{lemma}[Farkas' lemma]
Let $\bP \in \R^{m\times n}$  and $\boldb \in \R^{m}$. Then exactly one of the following two statements is true:
 \begin{itemize}
\item There exists a $\bx \in \R^{m}$ such that $\bP^{\top}\bx \geq 0$ and $\boldb^{\top}\bx  < 0$.
\item There exists a $\by \in \R^{n}$ such that $\bP\cdot\by = \boldb$ and $\by \geq 0$.
\end{itemize}
\end{lemma}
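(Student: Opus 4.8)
The plan is to prove Farkas' lemma via the separating hyperplane theorem, treating the two parts of ``exactly one'' separately: first the easy mutual exclusivity, then the dichotomy that at least one alternative must hold.

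\textbf{Mutual exclusivity.} This is immediate. If both statements held simultaneously, with $\bx$ as in the first and $\by$ as in the second, then $\boldb^{\top}\bx = (\bP\by)^{\top}\bx = \by^{\top}(\bP^{\top}\bx) \geq 0$, since $\by \geq 0$ and $\bP^{\top}\bx \geq 0$ force the inner product of two nonnegative vectors to be nonnegative. But the first statement also gives $\boldb^{\top}\bx < 0$, a contradiction. Hence at most one of the two statements can be true.

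\textbf{At least one holds.} The key is to reinterpret the second alternative geometrically. Introduce the finitely generated convex cone $C = \{\bP\by : \by \in \R^{n},\ \by \geq 0\} \subseteq \R^{m}$, i.e. the set of all nonnegative combinations of the columns $\bp_{1},\dots,\bp_{n}$ of $\bP$. By definition, the second alternative is solvable precisely when $\boldb \in C$. If $\boldb \notin C$, I would invoke the separating hyperplane theorem for a point and a \emph{closed} convex set to produce $\bx \in \R^{m}$ with $\boldb^{\top}\bx < \inf_{\bc \in C} \bc^{\top}\bx$. Since $0 \in C$ the infimum is at most $0$, so $\boldb^{\top}\bx < 0$; and since $C$ is a cone, closed under positive scaling, one must have $\bc^{\top}\bx \geq 0$ for every $\bc \in C$ (otherwise scaling a violating $\bc$ drives the infimum to $-\infty$). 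Specializing to the columns $\bp_{j}$ yields $\bP^{\top}\bx \geq 0$, so the first alternative holds. Combined with the previous paragraph, exactly one statement is true.

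\textbf{Main obstacle: closedness of $C$.} The separation step is valid only because $C$ is closed, and this is the one genuinely nontrivial point. I would establish it by a Carath\'eodory-type decomposition: every element of $C$ is already a nonnegative combination of a \emph{linearly independent} subset of the columns, so $C = \bigcup_{J} C_{J}$, where $J$ ranges over the finitely many linearly independent index sets $J \subseteq \{1,\dots,n\}$ and $C_{J} = \{\sum_{j \in J} y_{j}\bp_{j} : y_{j} \geq 0\}$. Each $C_{J}$ is the image of the closed orthant $\R_{\geq 0}^{|J|}$ under the \emph{injective} linear map $(y_{j})_{j \in J} \mapsto \sum_{j \in J} y_{j}\bp_{j}$; an injective linear map is a homeomorphism onto the subspace it spans, which is itself closed in $\R^{m}$, so $C_{J}$ is closed. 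A finite union of closed sets is closed, giving closedness of $C$. This decomposition argument is the step I expect to require the most care.
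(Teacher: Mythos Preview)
Your proof is correct and follows the standard route to Farkas' lemma via the separating hyperplane theorem, with the closedness of the finitely generated cone established by the Carath\'eodory-type reduction to linearly independent generators. Each step is sound: the mutual exclusivity computation, the cone argument that forces $\bc^{\top}\bx \geq 0$ for all $\bc \in C$ (and hence $\bP^{\top}\bx \geq 0$), and the closedness argument via the finite union $C = \bigcup_{J} C_{J}$.

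That said, the paper does not actually prove this lemma: it is stated as a classical result (attributed to Gale, Kuhn and Tucker) and invoked as a black box in the proof of \cref{lem:Farka_eq}. So there is no ``paper's own proof'' to compare against here. Your write-up supplies a self-contained argument where the paper simply cites the literature; if the goal is to match the paper, a one-line citation would suffice, but if the goal is completeness, what you have is the standard and correct proof.
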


With this lemma at hand, we have:

\PM{Some final fine-tuning needed here, will do so tomorrow.}

\begin{proof}[Proof of \cref{lem:Farka_eq}]
Assume without loss of generality that the value of the zero-sum game is zero. 
and that the first agent is a maximizing agent.
Let $A$ be the payoff matrix of the first agent and hence $A^{T}=A$ the payoff matrix of the second/minimizing agent. We will show first that for any non-essential strategy $\pure_\play$ of each agent there exists a \acl{NE} strategy of his opponent such that the expected performance of $\pure_\play$ is strictly worse than the value of the game (i.e. zero).   

It suffices to argue this for the first agent. Let  $\pure_\play$ be one of his non-essential strategies then by definition there does not exist any \acl{NE} strategy of that agent that chooses  $\pure_\play$ with positive probability. This is equivalent to the negation of the following statement:

There exists a $\bx \in \R^{m}$ such that $\mathbf{P}^{\top}\bx \geq 0$ and $\mathbf{b}^{\top}\bx  < 0$

\noindent
where 

\begin{equation} \label{first_condition}
 \mathbf{P}^{\top} =\begin{pmatrix}                                            
                                                   \mathbf{A}^{\top} \\ 
                                                    \mathbf{I}_{m\times m}                                               
                                                    \end{pmatrix} = \begin{pmatrix}
                                  a_{11}& a_{21} & \ldots & a_{m1}\\
                                   \vdots & \vdots & \ldots & \vdots\\
                                  a_{1n}& a_{2n} & \ldots & a_{mn}\\
                                    1 & 0 & \ldots & 0 \\
                                  0&  1& \ldots &   0\\
                                  \vdots & \vdots & \ldots & \vdots\\
                                  0&  0& \ldots &   1\\ 
                                \end{pmatrix} ,
\end{equation}

\noindent
and
$\mathbf{b}= -\mathbf{e}_{i} = (0, \dots, 0, -1, 0, \dots, 0)^T$, the standard basis vector of dimension $m$ that ``chooses" the $\play$-th strategy.  By Farkas' lemma,  there exists a $\mathbf{y} \in \R^{m+n}$ such that $\mathbf{Py} = \mathbf{b}$ and $\mathbf{y} \geq 0$. It is convenient to express $\mathbf{y}= (\mathbf{z}; \mathbf{w})$ where    $\mathbf{z} \in  \R^{n}$ and $\mathbf{w} \in  \R^{m}$.
Hence, for all $j \neq \play \in \{1,2, \dots, m\}: \mathbf{(Py)}_j= \mathbf{(Az)}_j+\mathbf{w}_j=0$ and thus $\mathbf{(Az)}_j\leq 0$. Finally, for $j=\play: \mathbf{(Py)}_\play= \mathbf{(Az)}_\play+\mathbf{w}_\play=-1$ and thus $\mathbf{(Az)}_\play<0$. Hence $\mathbf{z}$ is a \acl{NE} strategy for the second player such that when the first agent chooses the non-essential strategy $\pure_\play$ he receives payoff which is strictly worse than his value (zero).

To complete the proof, for each essential strategy of the first agent there exists one equilibrium strategy of his that chooses it with positive probability (by definition). Similarly, for each non-essential strategy of the second agent there exists one equilibrium strategy of the first agent such that makes the expected payoff of that non-essential strategy strictly worse than the value of the game. The barycenter of all the above equilibrium strategies is still an equilibrium strategy (by convexity) and has all the desired properties.
\end{proof}

With all this at hand, we are finally in a position to prove \cref{thm:recurrence-boundary}:

\begin{proof}[Proof of \cref{thm:recurrence-boundary}]
We first show that the coupling $\fench(y) = \sum_{\play\in\players} \bracks{h_{\play}^{\ast}(y_{\play}) - \braket{y_{\play}}{\eq_{\play}}}$ defined in \eqref{eq:Fenchel} given any fully mixed initial condition strictly increases under \eqref{eq:FRL} when $\game$ is a $2$-player zero-sum game that does not have an equilibrium with full support. 

Indeed, by \eqref{lem:Farka_eq} there exists a mixed \acl{NE} $(\eq_{1}, \eq_{2})$ such that
\begin{inparaenum}
[\itshape i\upshape)]
\item
both players employ each of their essential strategies with positive probability over time;
and
\item
every player deviating to a non-essential strategy obtains a payoff lower than the value of the game.
\end{inparaenum}
As a result, any player playing an interior (fully mixed) strategy against such an equilibrium strategy must receive less utility than their value.
In more detail, we have
\begin{flalign}
\label{eq:dFench2}
\frac{d\fench}{dt}
	&= \sum_{\play\in\players} \braket{\payv_{\play}(x)}{x_{\play} - \eq_{\play}}
	= \braket{\payv_{1}(x)}{x_{1} - \eq_{1}}
		+ \braket{\payv_{2}(x)}{x_{2} - \eq_{2}}
	\notag\\
	&= \pay_{1}(x_{1},x_{2}) - \pay_{1}(\eq_{1},x_{2})
		+ \pay_{2}(x_{1},x_{2}) - \pay_{2}(x_{1},\eq_{2})
	\notag\\
	&= - \pay_{1}(\eq_{1},x_{2})
		 - \pay_{2}(x_{1},\eq_{2})
	\notag\\
	&< - \pay_{1}(\eq_{1},\eq_{2})
		 - \pay_{2}(\eq_{1},\eq_{2})
	= 0,
\end{flalign}
where we used the fact that $\choice_{\play} = \nabla h_{\play}^{\ast}$ in the first line (cf. \cref{app:aux}), and the assumption that $\eq$ is a \acl{NE} of a $2$-player zero-sum game such that  any agent playing an interior (fully mixed) strategy against such an equilibrium strategy must receive less utility than their value (and hence the agent himself receives more utility than the value of the game).
We thus conclude that $\fench(y(t))$ strictly increases under \eqref{eq:FRL}, as claimed.

Let $\eq=(\eq_{1},\eq_{2})$ be the \acl{NE} identified in \eqref{eq:dFench2} and let $\lyap(t) = \fench(y(t)) = \sum_{\play\in\players} \bracks{h_{\play}^{\ast}(y_{\play}(t)) - \braket{y_{\play}(t)}{\eq_{\play}}}$ denote the primal-dual coupling \eqref{eq:Fenchel} between $y(t)$ and $\eq$.
From \eqref{eq:dFench2}, we have that starting from any fully mixed strategy profile $x(0)\in \prod_{\play} \text{int}(\strats_{\play})$ and for all $t\geq 0$, $\lyap'(t) = \braket{\dot y(t)}{\nabla\fench(y(t))}<0$.
However, $\fench$ is bounded from below by $-\sum_{\play} \max_{x_{\play}\in\strats_{\play}} h_{\play}(x_{\play})$, and since $\fench(y(t))$ is strictly decreasing, it must exhibit a finite limit.

We begin by noting that $x(t) = \choice(y(t))$ is Lipschitz continuous in $t$.
Indeed, $\payv$ is Lipschitz continuous on $\strats$ by linearity;
furthermore, since the regularizer functions $h_{\play}$ are assumed $\strong_{\play}$-strongly convex, it follows that $\choice_{\play}$ is $(1/\strong_{\play})$-continuous by standard convex analysis arguments \cite[Theorem~12.60]{RW98}.
In turn, this implies that the field of motion $\dynfield(y) \equiv \payv(\choice(y))$ of \eqref{eq:FRL} is Lipschitz continuous, so the dynamics are well-posed and $y(t)$ is differentiable.
Since $\dot y = \payv$ and, in addition, $\payv$ is bounded on $\strats$, we conclude that $\dot y$ is bounded so, in particular, $y(t)$ is Lipschitz continuous on $[0,\infty)$.
We thus conclude that $x(t) = \choice(y(t))$ is Lipschitz continuous as the composition of Lipschitz continuous functions.

We now further claim that $\lyap'(t)$ is also Lipschitz continuous in $t$.
Indeed, by \eqref{eq:dFench}, we have $\lyap'(t) = \sum_{\play\in\players} \braket{\payv_{\play}(x(t))}{x_{\play}(t) - \eq_{\play}}$;
since $\payv_{\play}$ is Lipschitz continuous in $x$ and $x(t)$ is Lipschitz continuous in $t$, our claim follows trivially.
Hence, by \cref{lem:derivative}, we conclude that $\lim_{t\to\infty} \lyap'(t) = \lim_{\to\infty} \sum_{\play\in\players} \braket{\payv_{\play}(x(t)}{x_{\play}(t) - \eq_{\play}} = 0$.

By \eqref{eq:dFench2}, we know that $\lyap'(t) < 0$ as long as $x(t)$ is interior.
Hence, any $\omega$-limit $\olim$ of $x(t)$ cannot be interior (given that the embedded game does not have any interior \aclp{NE}).
Moreover, we can repeat this argument for any subspace such that the restriction of the game on that subspace (when ignoring the strategies that are played with probability zero) does not have a fully mixed NE. 
We thus conclude that the support of $\olim$ must be a subset of the support of $\eq$.
Since $\game$ does not admit an interior equilibrium, $\eq$ does not have full support, so every $\omega$-limit of $x(t)$ lies on the boundary of $\strats$, as claimed.
\end{proof}


\begin{figure}[tbp]
\centering
\subfigure{\includegraphics[width=.45\textwidth]{Figures/Cycles-Logit.pdf}}
\hfill
\subfigure{\includegraphics[width=.45\textwidth]{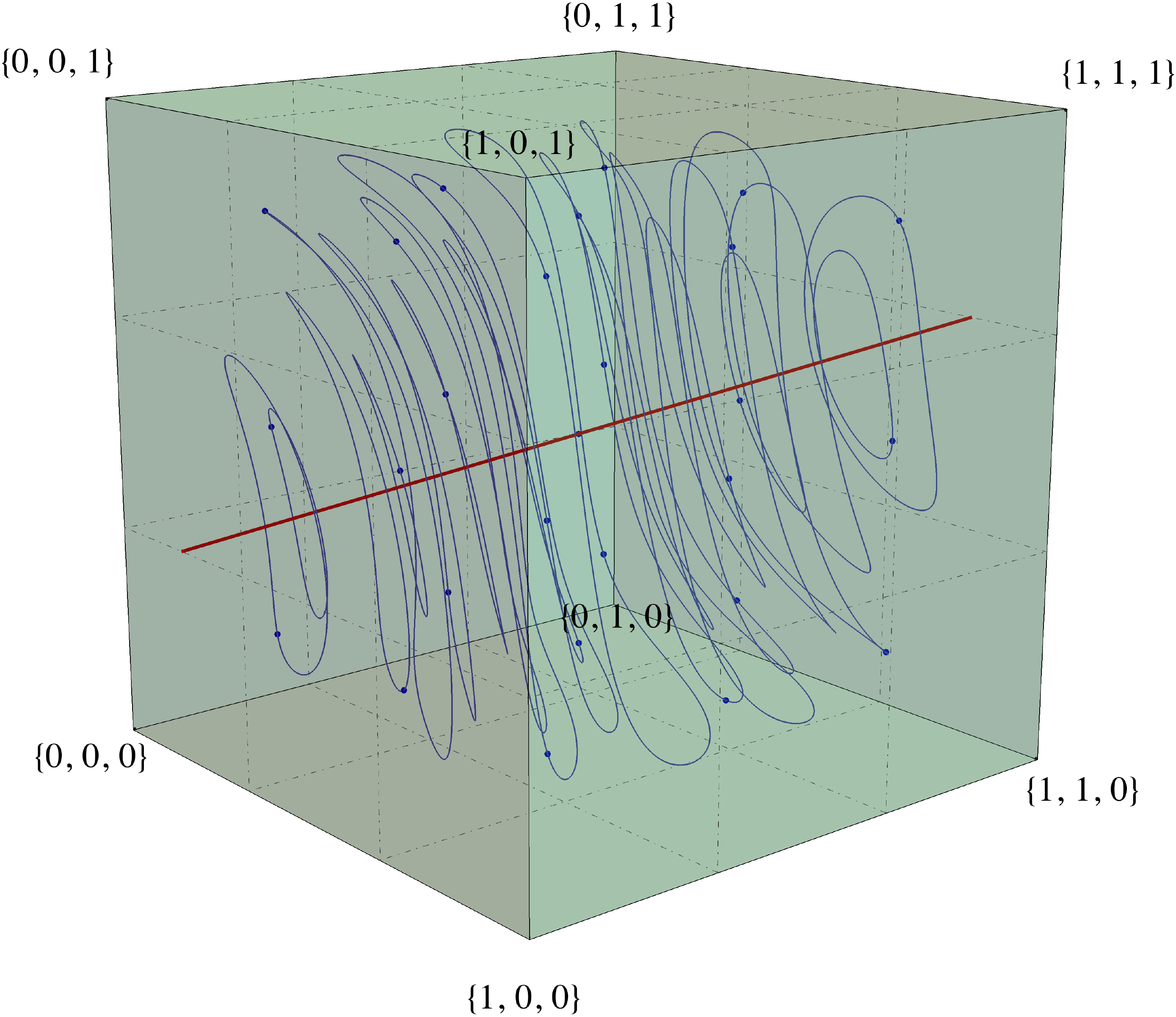}}%
\caption{Evolution of the \acl{MW} dynamics \eqref{eq:MW} in a $3$-player zero-sum polymatrix game.
In the left subfigure, each pair of players faces off in a game of standard (symmetric) Matching Pennies;
in the right, the game on each pair is weighted by a different factor.
In both cases, we plot the solution trajectories of \eqref{eq:MW} for the same initial conditions.
Even though the different weights change the trajectories of \eqref{eq:MW} and the game's equilibrium set, the cycling behavior of the dynamics remains unaffected.}%
\label{fig:recurrence-affine}
\end{figure}


We close this appendix with the proof of our result on constant-sum polymatrix games (and positive affine transformations thereof):

\begin{proof}[Proof of \cref{thm:recurrence-net}]
Our proof follows closely that of \cref{thm:recurrence};
to streamline our presentation, we only highlight here the points that differ due to working with (an positive-affine transformations of) a network of constant-sum games (as opposed to a single $2$-player zero-sum game).

The first such point is the incompressibility of the ``reduced'' dynamics \eqref{eq:FRL-z}.
By definition, we have $\pay_{\play}(x) = \sum_{\playalt\in\nodes_{\play}} \pay_{\play\playalt}(x_{\play},x_{\playalt})$, so we also have
\begin{equation}
\payv_{\play\pure_{\play}}(x)
	= \sum_{\playalt\in\nodes_{\play}} \pay_{\play\playalt}(\pure_{\play},x_{\playalt}).
\end{equation}
Since $\pay_{\play\playalt}(\pure_{\play},x_{\playalt})$ does not depend on $x_{\play}$, we readily obtain $\pd_{\pure_{\play}} \payv_{\play\pure_{\play}}(x) = 0$ and incompressibility follows as before.

Let the network game in question be isomorphic to a network of constant-sum games after the following positive-affine transformation of utilities, $\pay_{\play}(x)\leftarrow a_i \pay_{\play}(x)+b_i$ where $a_i>0$.
The second point of interest is the use of the coupling $\fench(y) = \sum_{\play\in\players} a_i \bracks{h_{\play}^{\ast}(y_{\play}) - \braket{y_{\play}}{\sol_{\play}}}$ as a constant of motion for \eqref{eq:FRL}.
Indeed, adapting the derivation of \eqref{eq:dFench}, we now get
\begin{flalign}
\label{eq:dFench-net}
\frac{d\fench}{dt}
	&= \braket{\dot y}{\nabla\fench(y)}
	= \sum_{\play\in\players} \braket{\payv_{\play}(x)}{a_i(\nabla h_{\play}^{\ast}(y_{\play}) - \eq_{\play})}
	= \sum_{\play\in\players} \braket{a_i\payv_{\play}(x)}{x_{\play} - \eq_{\play}}
	\notag\\
	&= \sum_{\play\in\players} \sum_{\playalt\in\nodes_{\play}} \braket{a_i\payv_{\play\playalt}(x)}{x_{\play} - \eq_{\play}}
	\notag\\
	&= \sum_{\{\play,\playalt\}\in\edges}
		\bracks{
			a_i\pay_{\play\playalt}(x_{\play},x_{\playalt})+b_i - a_i\pay_{\play\playalt}(\eq_{\play},x_{\playalt})-b_i
			+ a_j\pay_{\playalt\play}(x_{\play},x_{\playalt})+b_j - a_j\pay_{\playalt\play}(x_{\play},\eq_{\playalt})-b_j}
	\notag\\
	&= 0,
\end{flalign}
where the third line follows by regrouping the summands in the second line by edge, and the last line follows as in the case of \eqref{eq:dFench}.
This implies that $\fench(y(t))$ remains constant along any solution of \eqref{eq:FRL}, so the rest of the proof follows as in the case of \cref{thm:recurrence}.
\end{proof}

\section{Auxiliary results}
\label{app:aux}

In this appendix, we provide two auxiliary results that are used in the proof of \cref{thm:recurrence}.
The first one shows that if the score difference between two strategies grows large, the strategy with the lower score becomes extinct:

\begin{lemma}
\label{lem:scorediff}
Let $\pures$ be a finite set and let $h$ be a regularizer on $\strats \equiv \simplex(\pures)$.
If the sequence $y_{\run}\in\R^{\pures}$ is such that $y_{\purealt,\run} - y_{\pure,\run} \to\infty$ for some $\pure,\purealt\in\pures$, then $\lim_{\run\to\infty} \choice_{\pure}(y_{\run}) = 0$.
\end{lemma}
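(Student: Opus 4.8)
The plan is to argue by contradiction, using only the global optimality that \emph{defines} $\choice$ and deliberately avoiding first-order conditions (the identity $\choice = \nabla h^{\ast}$ is fine in the interior but delicate on faces of $\strats$, where $h$ need not be differentiable). Write $x_{\run} = \choice(y_{\run})$ and suppose, towards a contradiction, that $\choice_{\pure}(y_{\run}) = x_{\pure,\run}$ does \emph{not} converge to $0$. Then there is some $\delta > 0$ and a subsequence — which we relabel as $y_{\run}$ — along which $x_{\pure,\run} \geq \delta$ for all $\run$.

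For each such $\run$, let $x_{\run}'$ be the strategy obtained from $x_{\run}$ by transferring all the mass $x_{\pure,\run}$ sitting on coordinate $\pure$ over to coordinate $\purealt$; formally $x_{\run}' = x_{\run} - x_{\pure,\run}\be_{\pure} + x_{\pure,\run}\be_{\purealt} \in \strats$. Since $x_{\run}$ maximizes $\braket{y_{\run}}{\argdot} - h(\argdot)$ over $\strats$, we have $\braket{y_{\run}}{x_{\run}} - h(x_{\run}) \geq \braket{y_{\run}}{x_{\run}'} - h(x_{\run}')$, which rearranges to
\[
h(x_{\run}') - h(x_{\run}) \geq \braket{y_{\run}}{x_{\run}' - x_{\run}} = x_{\pure,\run}\,(y_{\purealt,\run} - y_{\pure,\run}) \geq \delta\,(y_{\purealt,\run} - y_{\pure,\run}).
\]
By hypothesis the right-hand side tends to $+\infty$. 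On the other hand, $h$ is continuous on the compact set $\strats$, hence bounded, so the left-hand side never exceeds $\max_{\strats} h - \min_{\strats} h < \infty$ — a contradiction. Therefore $x_{\pure,\run} = \choice_{\pure}(y_{\run}) \to 0$, as claimed.

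I do not anticipate a genuine obstacle here: the argument is essentially a two-line comparison between the optimal point $\choice(y_{\run})$ and the cheap competitor $x_{\run}'$. The only point meriting a little care is exactly the one noted above — staying with the defining variational characterization of $\choice$ rather than gradient conditions — together with the elementary observation that a continuous function on the compact simplex is bounded, which is what keeps the penalty difference $h(x_{\run}') - h(x_{\run})$ from being able to absorb the diverging linear gain.
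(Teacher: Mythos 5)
Your proof is correct and is essentially the paper's own argument: both compare the optimal point $\choice(y_{\run})$ against a competitor obtained by shifting mass from coordinate $\pure$ to coordinate $\purealt$ (the paper shifts the uniform lower bound $\eps$ of mass, you shift all of $x_{\pure,\run}$ — an immaterial difference), and both conclude via boundedness of the continuous regularizer $h$ on the compact simplex. The only cosmetic remark is that the step $x_{\pure,\run}(y_{\purealt,\run}-y_{\pure,\run}) \geq \delta\,(y_{\purealt,\run}-y_{\pure,\run})$ uses that $y_{\purealt,\run}-y_{\pure,\run}\geq 0$, which holds for all sufficiently large $\run$ by hypothesis.
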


\begin{proof}
Set $x_{\run} = \choice(y_{\run})$ and, by descending to a subsequence if necessary, assume there exists some $\eps>0$ such that $x_{\pure,\run} \geq \eps > 0$ for all $\run$.
Then, by the defining relation $\choice(y) = \argmax\{\braket{y}{x} - h(x)\}$ of $\choice$, we have:
\begin{flalign}
\label{eq:Qcomp1}
\braket{y_{\run}}{x_{\run}} - h(x_{\run})
	\geq \braket{y_{\run}}{x'} - h(x')
\end{flalign}
for all $x'\in\simplex$.
Therefore, taking $x_{\run}' = x_{\run} + \eps(\bvec_{\purealt} - \bvec_{\pure})$, we readily obtain
\begin{equation}
\label{eq:Qcomp2}
\eps (y_{\pure,\run} - y_{\purealt,\run})
	\geq h(x_{\run}) - h(x_{\run}')
	\geq \min h - \max h
\end{equation}
which contradicts our original assumption that $y_{\pure,\run} - y_{\purealt,\run} \to -\infty$.
With $\simplex$ compact, the above shows that $x_{\pure}^{\ast} = 0$ for any limit point $x^{\ast}$ of $x_{\run}$, i.e. $\choice_{\pure}(y_{\run})\to0$.
\end{proof}

A key step of the proof of \cref{thm:recurrence} consists of showing that the level sets of the Fenchel coupling $\fench(\base,y)$ become bounded under the coordinate reduction transformation $y\mapsto\Pi(y) = z$, so every solution orbit $z(t)$ of \eqref{eq:FRL-z} also remains bounded.
We encode this in the following lemma:

\begin{lemma}
\label{lem:Fenchel-bounded}
Let $\pures$ be a finite set, let $h$ be a regularizer on $\strats \equiv \simplex(\pures)$, and fix some interior $\base\in\strats$.
If the sequence $y_{\run}\in\R^{\pures}$ is such that $\sup_{\run} \abs{h^{\ast}(y_{\run}) - \braket{y_{\run}}{\base}} < \infty$, the differences $y_{\purealt,\run} - y_{\pure,\run}$ also remain bounded for all $\pure,\purealt\in\pures$.
\end{lemma}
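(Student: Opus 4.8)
The plan is to exploit the variational formula for the convex conjugate together with the interiority of $\base$. First I would rewrite the relevant quantity as
\[
h^{\ast}(y) - \braket{y}{\base}
	= \max_{x\in\simplex} \braces{\braket{y}{x - \base} - h(x)} .
\]
Taking $x = \base$ on the right-hand side shows that this is bounded below by $-h(\base)$ for \emph{every} $y$; hence the content of the assumption $\sup_{\run}\abs{h^{\ast}(y_{\run}) - \braket{y_{\run}}{\base}} < \infty$ is really an upper bound, say $h^{\ast}(y_{\run}) - \braket{y_{\run}}{\base} \leq M$ for all $\run$. Unwinding the maximum, this says
\[
\braket{y_{\run}}{x - \base} \leq M + h(x) \leq M + \max\nolimits_{\simplex} h =: M'
\]
for every $x\in\simplex$ and every $\run$, where $\max_{\simplex} h$ is finite because $h$ is continuous on the compact simplex.

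Next I would plug in suitable test points. Fix $\pure \neq \purealt$ in $\pures$. Since $\base$ is interior we have $\base_{\pure} > 0$, so for any $t\in(0,\base_{\pure}]$ the perturbed point $x = \base + t\,(\bvec_{\purealt} - \bvec_{\pure})$ still lies in $\simplex$ (its $\pure$-coordinate stays nonnegative, the other entries are unchanged or increased, and the entries still sum to $1$). For this $x$ we get $\braket{y_{\run}}{x - \base} = t\,(y_{\purealt,\run} - y_{\pure,\run})$, so the inequality above yields $y_{\purealt,\run} - y_{\pure,\run} \leq M'/t$, uniformly in $\run$. Exchanging the roles of $\pure$ and $\purealt$ gives the matching lower bound, whence $\sup_{\run}\abs{y_{\purealt,\run} - y_{\pure,\run}} < \infty$ for every pair $\pure,\purealt\in\pures$, which is the assertion.

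I do not expect a genuine obstacle here; the only point that really carries weight is the interiority of $\base$, which is exactly what makes the perturbed points $\base + t(\bvec_{\purealt} - \bvec_{\pure})$ feasible and therefore admissible as competitors in the $\max$ defining $h^{\ast}$ (if $\base$ sat on the boundary, the perturbation lowering a zero coordinate would leave $\simplex$ and the argument would fail in that direction). When this lemma is used inside the proof of \cref{thm:recurrence}, its hypothesis is furnished by the constancy of $\fench$: each player's summand $h_{\play}^{\ast}(y_{\play}) - \braket{y_{\play}}{\eq_{\play}}$ is bounded below by the computation above, so a constant sum of such terms forces each of them to be bounded above, and the required interiority of the base point is inherited from $\eq$ being an interior \acl{NE}.
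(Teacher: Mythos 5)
Your argument is correct, and it is genuinely different from (and simpler than) the one in the paper. The paper proves \cref{lem:Fenchel-bounded} by contradiction: it extracts subsequences, partitions $\pures$ into indices whose score deficit $y_{\run}^{+}-y_{\pure,\run}$ stays bounded versus those for which it diverges, invokes \cref{lem:scorediff} to conclude that the latter strategies become extinct under $\choice$, and then shows $\braket{y_{\run}}{\base - x_{\run}}\to-\infty$, contradicting boundedness of $\fench_{\run} = \braket{y_{\run}}{\base - x_{\run}} - h(x_{\run})$. You instead read the hypothesis directly through the variational formula $h^{\ast}(y)-\braket{y}{\base}=\max_{x\in\simplex}\{\braket{y}{x-\base}-h(x)\}$ and test it against the feasible perturbations $\base + t(\bvec_{\purealt}-\bvec_{\pure})$, which is exactly where the interiority of $\base$ enters. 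Your route avoids \cref{lem:scorediff} and the subsequence bookkeeping entirely, and it yields an explicit quantitative bound, namely $\abs{y_{\purealt,\run}-y_{\pure,\run}}\leq \bigl(M+\max_{\simplex}h\bigr)\big/\min_{\pure}\base_{\pure}$, rather than mere boundedness; your observation that the coupling is automatically bounded below by $-h(\base)$, so that only the upper bound carries content, is also a nice clarification of the hypothesis. All the feasibility checks are in order (for $t\in(0,\base_{\pure}]$ the $\pure$-coordinate stays nonnegative and $\base_{\purealt}+t\leq\base_{\purealt}+\base_{\pure}\leq 1$), so I see no gap.
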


\begin{proof}
We argue by contradiction.
Indeed, assume that the sequence $\fench_{\run} \equiv h^{\ast}(y_{\run}) - \braket{y_{\run}}{\base}$ is bounded but $\limsup_{\run\to\infty} \abs{y_{\pure,\run} - y_{\purealt,\run}} = \infty$ for some $\pure,\purealt\in\pures$.
Letting $y_{\run}^{+} = \max_{\pure} y_{\pure,\run}$ and $y_{\run}^{-} = \min_{\pure\in\pures} y_{\pure,\run}$, this implies that $\limsup_{\run\to\infty} (y_{\run}^{+} - y_{\run}^{-}) = \infty$.
Hence, by descending to a subsequence if necessary, there exist $\pure^{+},\pure^{-} \in \pures$ such that
\begin{inparaenum}
[\itshape a\upshape)]
\item
$y_{\run}^{\pm} = y_{\pure^{\pm},\run}$ for all $\run$;
and
\item
$y_{\pure^{+},\run} - y_{\pure^{-},\run}\to\infty$ as $\run\to\infty$.
\end{inparaenum}

By construction, we have $y_{\pure^{-},\run} = y_{\run}^{-} \leq y_{\pure,\run} \leq y_{\run}^{+} = y_{\pure^{+},\run}$ for all $\pure\in\pures$.
Thus, by descending to a further subsequence if necessary, we may assume that the index set $\pures$ can be partitioned into two nonempty sets $\pures^{+}$ and $\pures^{-}$ such that
\begin{enumerate}
\item
$y_{\run}^{+} - y_{\pure,\run}$ is bounded for all $\pure\in\pures^{+}$.
\item
$y_{\run}^{+} - y_{\pure,\run}\to\infty$ for all $\pure\in\pures^{-}$.
\end{enumerate}
In more detail, consider the quantity
\begin{equation}
\delta_{\pure}
	= \liminf_{\run\to\infty} (y_{\run}^{+} - y_{\pure,\run}),
\end{equation}
and construct the required partition $\{\pures^{ +},\pures^{-}\}$ according to the following procedure:

\begin{algorithm}
\begin{algorithmic}[1]
\makeatletter
\setcounter{ALC@line}{-1}
\makeatother

\STATE
	Set $\pures^{+} \leftarrow \{\pure^{+}\}$, $\pures^{-} = \pures\setminus\pures^{+}$

\WHILE{$\delta_{\pure}<\infty$ for some $\pure\in\pures^{-}$}
	\STATE
		pick $\peq$ such that $\delta_{\peq}<\infty$;
	\STATE
		set $\pures^{+} \leftarrow \pures^{+}\cup\{\peq\}$, $\pures^{-}\leftarrow \pures^{-}\exclude{\peq}$;
	\STATE
		descend to a subsequence of $y_{\run}$ that realizes $\delta_{\peq}$;
	\STATE
		redefine $\delta_{\pure}$ for all $\pure\in\pures$ based on chosen subsequence;
\ENDWHILE

\RETURN
	$\pures^{+},\pures^{-}$
\end{algorithmic}
\end{algorithm}

Thus, if we let $x_{\run} = \choice(y_{\run})$ we readily obtain:
\begin{flalign}
\braket{y_{\run}}{\base - x_{\run}}
	&= \sum_{\pure\in\pures} y_{\pure,\run} (\base_{\pure} - x_{\pure,\run})
	= \sum_{\pure\in\pures} (y_{\pure,\run} - y_{\run}^{+}) (\base_{\pure} - x_{\pure,\run})
	\notag\\[1ex]
	&= \sum_{\pure\in\pures^{+}} (y_{\pure,\run} - y_{\run}^{+}) (\base_{\pure} - x_{\pure,\run})
	+ \sum_{\pure\in\pures^{-}} (y_{\pure,\run} - y_{\run}^{+}) (\base_{\pure} - x_{\pure,\run}),
\end{flalign}
where we used the fact that $\sum_{\pure\in\pures} \base_{\pure} = \sum_{\pure\in\pures} x_{\pure,\run} = 1$ in the first line.
The first sum above is bounded by assumption.
As for the second one, the fact that $y_{\pure^{+},\run} - y_{\pure,\run} = y_{\run}^{+} - y_{\pure,\run} \to \infty$ implies that 
$x_{\pure,\run}\to0$ for all $\pure\in\pures^{-}$ (by \cref{lem:scorediff} above).
We thus get $\liminf_{\run} (\base_{\pure} - x_{\pure,\run}) > 0$ (recall that $p\in\intsimplex$), and hence, $\sum_{\pure\in\pures^{-}} (y_{\pure,\run} - y_{\run}^{+}) (p_{\pure} - x_{\pure,\run}) \to -\infty$.

From the above, we conclude that $\braket{y_{\run}}{\base - x_{\run}} \to -\infty$ as $\run\to\infty$.
However, by construction, we also have
\begin{equation}
\fench_{\run}
	= h^{\ast}(y_{\run}) - \braket{y_{\run}}{\eq}
	= \braket{y_{\run}}{x_{\run}} - h(x_{\run}) - \braket{y_{\run}}{\eq}
	= \braket{y_{\run}}{\base - x_{\run}} - h(x_{\run}).
\end{equation}
Since $h$ is finite on $\strat$, it follows that $\fench_{\run}\to-\infty$, contradicting our assumption that $\fench_{\run}$ is bounded.
Retracing our steps, this implies that $\sup_{\run} \abs{y_{\pure,\run} - y_{\purealt,\run}} < \infty$, as claimed.
\end{proof}

The final result we state here is a technical result regarding the asymptotic behavior of the derivative of functions with a finite limit at infinity:

\begin{lemma}
\label{lem:derivative}
Suppose that $\lyap\from[0,\infty)\to\R$ is differentiable with Lipschitz continuous derivative.
If $\lim_{t\to\infty} \lyap(t)$ exists and is finite, we have $\lim_{t\to\infty} \lyap'(t) = 0$.
\end{lemma}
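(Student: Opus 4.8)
The plan is to argue by contradiction using the Cauchy criterion for the finite limit of $\lyap$. Suppose $\lyap'(t)\not\to0$ as $t\to\infty$. Then there exist $\eps>0$ and an increasing sequence of times $t_{\run}\uparrow\infty$ such that $\abs{\lyap'(t_{\run})}\geq\eps$ for every $\run$. Let $M>0$ be a Lipschitz constant for $\lyap'$ and set $\delta = \eps/(2M)$. For any $s$ with $\abs{s - t_{\run}}\leq\delta$ we then have $\abs{\lyap'(s) - \lyap'(t_{\run})}\leq M\delta = \eps/2$, and hence $\abs{\lyap'(s)}\geq\eps/2$ on the whole interval $[t_{\run}, t_{\run}+\delta]$.

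The second step is to turn this uniform lower bound into a fixed oscillation of $\lyap$. Since $\lyap'$ is continuous and never vanishes on $[t_{\run}, t_{\run}+\delta]$, the intermediate value theorem forces $\lyap'$ to keep a constant sign there; therefore
\begin{equation}
\notag
\abs{\lyap(t_{\run}+\delta) - \lyap(t_{\run})}
	= \abs[\Big]{\int_{t_{\run}}^{t_{\run}+\delta} \lyap'(s)\dd s}
	= \int_{t_{\run}}^{t_{\run}+\delta} \abs{\lyap'(s)}\dd s
	\geq \frac{\eps\delta}{2}
	= \frac{\eps^{2}}{4M}.
\end{equation}
Thus $\abs{\lyap(t_{\run}+\delta) - \lyap(t_{\run})}$ is bounded below by a fixed positive constant for arbitrarily large $t_{\run}$, which contradicts the fact that $\lyap(t)$ converges to a finite limit (and hence is Cauchy at infinity). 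This contradiction shows $\lim_{t\to\infty}\lyap'(t) = 0$.

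I expect no serious obstacle here: the only point that needs a little care is the sign argument guaranteeing that the integral of $\lyap'$ over $[t_{\run},t_{\run}+\delta]$ does not suffer cancellation, which is exactly where continuity of $\lyap'$ (a consequence of it being Lipschitz) is used. Everything else is a routine $\eps$--$\delta$ estimate, so the write-up should be short.
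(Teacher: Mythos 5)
Your argument is correct and is essentially the paper's own proof: both use the Lipschitz bound on $\lyap'$ to propagate the lower bound $\abs{\lyap'}\geq\eps/2$ over an interval of fixed length $\eps/(2M)$, then integrate to get a fixed oscillation of $\lyap$ contradicting convergence. The only cosmetic difference is that the paper disposes of the sign by a ``without loss of generality'' reduction to $\lyap'(t_{\run})\geq\eps$, where you instead keep the absolute value and invoke the intermediate value theorem to rule out cancellation; both are fine.
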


\begin{proof}
Assume ad absurdum that $\lim_{t\to\infty} \lyap'(t) \neq 0$.
Then, without loss of generality, we may assume there exists some $\eps>0$ and an increasing sequence $t_{n}\uparrow\infty$ such that $\lyap'(t_{n}) \geq \eps$ for all $n\in\N$.
Thus, if $M$ denotes the Lipschitz constant of $\lyap'$ and $t\in[t_{n},t_{n} + \eps/(2M)]$, we readily obtain
\begin{equation}
\abs{\lyap'(t) - \lyap'(t_{n})}
	\leq M \abs{t - t_{n}}
	\leq M \cdot \frac{\eps}{2M}
	= \frac{\eps}{2}
\end{equation}
by the Lipschitz continuity of $\lyap'$.
Since $\lyap'(t_{n}) \geq \eps$ by assumption, we conclude that $\lyap'(t) \geq \eps/2$ for all $t\in[t_{n},t_{n}+\eps/(2M)]$.
Hence, by integrating, we get $\lyap(t_{n}+\eps/(2M)) \geq \lyap(t_{n}) + \eps/(2M) \cdot (\eps/2) = \lyap(t_{n}) + \eps^{2}/(4M)$ for all $n\in\N$.
Taking $n\to\infty$ and recalling that $\lyap_{\infty} \equiv \lim_{t\to\infty} \lyap(t)$ exists and is finite, we get $\lyap_{\infty} = \lyap_{\infty} + \eps^{2}/(4M) > \lyap_{\infty}$, a contradiction.
\end{proof}

\bibliographystyle{siam}
\bibliography{IEEEabrv,../Bibliography,refer}

\begin{thebibliography}{10}

\bibitem{adler2009note}
{\sc I.~Adler, C.~Daskalakis, and C.~H. Papadimitriou}, {\em A note on strictly
  competitive games.}, in WINE, Springer, 2009, pp.~471--474.

\bibitem{Arora05themultiplicative}
{\sc S.~Arora, E.~Hazan, and S.~Kale}, {\em The multiplicative weights update
  method: a meta-algorithm and applications.}, Theory of Computing, 8 (2012),
  pp.~121--164.

\bibitem{ABRT04}
{\sc H.~Attouch, J.~Bolte, P.~Redont, and M.~Teboulle}, {\em Singular
  {Riemannian} barrier methods and gradient-projection dynamical systems for
  constrained optimization}, Optimization, 53 (2004), pp.~435--454.

\bibitem{barreira}
{\sc L.~Barreira}, {\em Poincare recurrence: old and new}, in XIVth
  International Congress on Mathematical Physics. World Scientific., 2006,
  pp.~415--422.

\bibitem{ben2001lectures}
{\sc A.~Ben-Tal and A.~Nemirovski}, {\em Lectures on modern convex
  optimization: analysis, algorithms, and engineering applications}, SIAM,
  2001.

\bibitem{cai2016zero}
{\sc Y.~Cai, O.~Candogan, C.~Daskalakis, and C.~Papadimitriou}, {\em Zero-sum
  polymatrix games: A generalization of minmax}, Mathematics of Operations
  Research, 41 (2016), pp.~648--655.

\bibitem{Cai}
{\sc Y.~Cai and C.~Daskalakis}, {\em On minmax theorems for multiplayer games},
  in ACM-SIAM Symposium on Discrete Algorithms, SODA, 2011, pp.~217--234.

\bibitem{Cesa06}
{\sc N.~Cesa-Bianchi and G.~Lugoisi}, {\em Prediction, Learning, and Games},
  Cambridge University Press, 2006.

\bibitem{Daskalakis:2011:NNA:2133036.2133057}
{\sc C.~Daskalakis, A.~Deckelbaum, and A.~Kim}, {\em Near-optimal no-regret
  algorithms for zero-sum games}, in Proceedings of the Twenty-second Annual
  ACM-SIAM Symposium on Discrete Algorithms, SODA '11, Philadelphia, PA, USA,
  2011, Society for Industrial and Applied Mathematics, pp.~235--254.

\bibitem{DP09}
{\sc C.~Daskalakis and C.~H. Papadimitriou}, {\em On a network generalization
  of the minmax theorem}, in ICALP 2009: Proceedings of the 2009 International
  Colloquium on Automata, Languages, and Programming, 2009.

\bibitem{foster2016learning}
{\sc D.~J. Foster, T.~Lykouris, K.~Sridharan, and E.~Tardos}, {\em Learning in
  games: Robustness of fast convergence}, in Advances in Neural Information
  Processing Systems, 2016, pp.~4727--4735.

\bibitem{Fri91}
{\sc D.~Friedman}, {\em Evolutionary games in economics}, Econometrica, 59
  (1991), pp.~637--666.

\bibitem{BookGKT}
{\sc D.~Gale, H.~Kuhn, and A.~W. Tucker}, {\em (Linear Programming and the
  Theory of Games - Chapter XII) in Koopmans, Activity Analysis of Production
  and Allocation}, Wiley, 1951.

\bibitem{goodfellow2014generative}
{\sc I.~Goodfellow, J.~Pouget-Abadie, M.~Mirza, B.~Xu, D.~Warde-Farley,
  S.~Ozair, A.~Courville, and Y.~Bengio}, {\em Generative adversarial nets}, in
  Advances in neural information processing systems, 2014, pp.~2672--2680.

\bibitem{hazan2016introduction}
{\sc E.~Hazan et~al.}, {\em Introduction to online convex optimization},
  Foundations and Trends{\textregistered} in Optimization, 2 (2016),
  pp.~157--325.

\bibitem{Hofbauer98}
{\sc J.~Hofbauer and K.~Sigmund}, {\em Evolutionary Games and Population
  Dynamics}, Cambridge University Press, Cambridge, 1998.

\bibitem{HSV09}
{\sc J.~Hofbauer, S.~Sorin, and Y.~Viossat}, {\em Time average replicator and
  best reply dynamics}, Mathematics of Operations Research, 34 (2009),
  pp.~263--269.

\bibitem{immorlica2011dueling}
{\sc N.~Immorlica, A.~T. Kalai, B.~Lucier, A.~Moitra, A.~Postlewaite, and
  M.~Tennenholtz}, {\em Dueling algorithms}, in Proceedings of the forty-third
  annual ACM symposium on Theory of computing, ACM, 2011, pp.~215--224.

\bibitem{Kiw97b}
{\sc K.~C. Kiwiel}, {\em Free-steering relaxation methods for problems with
  strictly convex costs and linear constraints}, Mathematics of Operations
  Research, 22 (1997), pp.~326--349.

\bibitem{KM17}
{\sc J.~Kwon and P.~Mertikopoulos}, {\em A continuous-time approach to online
  optimization}, Journal of Dynamics and Games, 4 (2017), pp.~125--148.

\bibitem{LS08}
{\sc R.~Lahkar and W.~H. Sandholm}, {\em The projection dynamic and the
  geometry of population games}, Games and Economic Behavior, 64 (2008),
  pp.~565--590.

\bibitem{McF74a}
{\sc D.~L. McFadden}, {\em Conditional logit analysis of qualitative choice
  behavior}, in Frontiers in Econometrics, P.~Zarembka, ed., Academic Press,
  New York, NY, 1974, pp.~105--142.

\bibitem{MM10}
{\sc P.~Mertikopoulos and A.~L. Moustakas}, {\em The emergence of rational
  behavior in the presence of stochastic perturbations}, The Annals of Applied
  Probability, 20 (2010), pp.~1359--1388.

\bibitem{MS16}
{\sc P.~Mertikopoulos and W.~H. Sandholm}, {\em Learning in games via
  reinforcement and regularization}, Mathematics of Operations Research, 41
  (2016), pp.~1297--1324.

\bibitem{NZ97}
{\sc A.~Nagurney and D.~Zhang}, {\em Projected dynamical systems in the
  formulation, stability analysis, and computation of fixed demand traffic
  network equilibria}, Transportation Science, 31 (1997), pp.~147--158.

\bibitem{2017arXiv170301138P}
{\sc G.~{Palaiopanos}, I.~{Panageas}, and G.~{Piliouras}}, {\em {Multiplicative
  Weights Update with Constant Step-Size in Congestion Games: Convergence,
  Limit Cycles and Chaos}}, ArXiv e-prints,  (2017).

\bibitem{CRS16}
{\sc C.~Papadimitriou and G.~Piliouras}, {\em From nash equilibria to chain
  recurrent sets: Solution concepts and topology}, in ITCS, 2016.

\bibitem{PiliourasAAMAS2014}
{\sc G.~Piliouras, C.~Nieto-Granda, H.~I. Christensen, and J.~S. Shamma}, {\em
  Persistent patterns: Multi-agent learning beyond equilibrium and utility}, in
  AAMAS, 2014, pp.~181--188.

\bibitem{piliouras2014optimization}
{\sc G.~Piliouras and J.~S. Shamma}, {\em Optimization despite chaos: Convex
  relaxations to complex limit sets via poincar{\'e} recurrence}, in
  Proceedings of the twenty-fifth annual ACM-SIAM symposium on Discrete
  algorithms, SIAM, 2014, pp.~861--873.

\bibitem{Poincare1890}
{\sc H.~Poincar\'{e}}, {\em Sur le probl\`{e}me des trois corps et les
  \'{e}quations de la dynamique}, Acta Math, 13 (1890), pp.~1--270.

\bibitem{rakhlin2013optimization}
{\sc S.~Rakhlin and K.~Sridharan}, {\em Optimization, learning, and games with
  predictable sequences}, in Advances in Neural Information Processing Systems,
  2013, pp.~3066--3074.

\bibitem{RW98}
{\sc R.~T. Rockafellar and R.~J.~B. Wets}, {\em Variational Analysis}, vol.~317
  of A Series of Comprehensive Studies in Mathematics, Springer-Verlag, Berlin,
  1998.

\bibitem{Roughgarden09}
{\sc T.~Roughgarden}, {\em Intrinsic robustness of the price of anarchy}, in
  Proc. of STOC, 2009, pp.~513--522.

\bibitem{Rus99}
{\sc A.~Rustichini}, {\em Optimal properties of stimulus-response learning
  models}, Games and Economic Behavior, 29 (1999), pp.~244--273.

\bibitem{San10}
{\sc W.~H. Sandholm}, {\em Population Games and Evolutionary Dynamics}, MIT
  Press, Cambridge, MA, 2010.

\bibitem{SDL08}
{\sc W.~H. Sandholm, E.~Dokumac\i, and R.~Lahkar}, {\em The projection dynamic
  and the replicator dynamic}, Games and Economic Behavior, 64 (2008),
  pp.~666--683.

\bibitem{Sato02042002}
{\sc Y.~Sato, E.~Akiyama, and J.~D. Farmer}, {\em Chaos in learning a simple
  two-person game}, Proceedings of the National Academy of Sciences, 99 (2002),
  pp.~4748--4751.

\bibitem{SS83}
{\sc P.~Schuster and K.~Sigmund}, {\em Replicator dynamics}, Journal of
  Theoretical Biology, 100 (1983), pp.~533--538.

\bibitem{schuurmans2016deep}
{\sc D.~Schuurmans and M.~A. Zinkevich}, {\em Deep learning games}, in Advances
  in Neural Information Processing Systems, 2016, pp.~1678--1686.

\bibitem{SS11}
{\sc S.~Shalev-Shwartz}, {\em Online learning and online convex optimization},
  Foundations and Trends in Machine Learning, 4 (2011), pp.~107--194.

\bibitem{SSS07}
{\sc S.~Shalev-Shwartz and Y.~Singer}, {\em Convex repeated games and {Fenchel}
  duality}, in Advances in Neural Information Processing Systems 19, MIT Press,
  2007, pp.~1265--1272.

\bibitem{Sor09}
{\sc S.~Sorin}, {\em Exponential weight algorithm in continuous time},
  Mathematical Programming, 116 (2009), pp.~513--528.

\bibitem{Syrgkanis:2015:FCR:2969442.2969573}
{\sc V.~Syrgkanis, A.~Agarwal, H.~Luo, and R.~E. Schapire}, {\em Fast
  convergence of regularized learning in games}, in Proceedings of the 28th
  International Conference on Neural Information Processing Systems, NIPS'15,
  Cambridge, MA, USA, 2015, MIT Press, pp.~2989--2997.

\bibitem{Tay79}
{\sc P.~D. Taylor}, {\em Evolutionarily stable strategies with two types of
  player}, Journal of Applied Probability, 16 (1979), pp.~76--83.

\bibitem{TJ78}
{\sc P.~D. Taylor and L.~B. Jonker}, {\em Evolutionary stable strategies and
  game dynamics}, Mathematical Biosciences, 40 (1978), pp.~145--156.

\bibitem{VZ13}
{\sc Y.~Viossat and A.~Zapechelnyuk}, {\em No-regret dynamics and fictitious
  play}, Journal of Economic Theory, 148 (2013), pp.~825--842.

\bibitem{vN28}
{\sc J.~von Neumann}, {\em Zur {T}heorie der {G}esellschaftsspiele},
  Mathematische Annalen, 100 (1928), pp.~295--320.
\newblock Translated by S. Bargmann as ``On the Theory of Games of Strategy''
  in A. Tucker and R. D. Luce, editors, \emph{Contributions to the Theory of
  Games IV}, volume 40 of \emph{Annals of Mathematics Studies}, pages 13-42,
  1957, Princeton University Press, Princeton.

\bibitem{Weibull}
{\sc J.~W. Weibull}, {\em Evolutionary Game Theory}, MIT Press; Cambridge, MA:
  Cambridge University Press., 1995.

\end{thebibliography}

\end{document}